\documentclass[10pt,fullpage]{article}
\usepackage{amsmath,amssymb,amsfonts,amsthm,epsfig}
\usepackage{cancel}

\usepackage[usenames,dvipsnames]{xcolor}
\usepackage{bm,xspace}
\usepackage{tcolorbox}
\usepackage{cancel}
\usepackage{fullpage}
\usepackage{pras}
\usepackage{framed}
\usepackage{verbatim}
\usepackage{enumitem}
\usepackage{array}
\usepackage{multirow}
\usepackage{afterpage}
\usepackage{mathrsfs}
\usepackage{pifont} 
\usepackage{chngpage}
\usepackage[normalem]{ulem}
\usepackage{boxedminipage}
\usepackage{caption}
\usepackage{tikz}

\newcommand{\OPT}{\mathsf{OPT}}

\newcommand{\citet}[1]{\cite{#1}}

\def\colorful{1}

\ifnum\colorful=1

\fi
\ifnum\colorful=0

\fi

\newcommand{\cost}{\mathrm{cost}}

\makeatletter
\newtheorem*{rep@theorem}{\rep@title}
\newcommand{\newreptheorem}[2]{
\newenvironment{rep#1}[1]{
 \def\rep@title{#2 \ref{##1}}
 \begin{rep@theorem}\itshape}
 {\end{rep@theorem}}}
\makeatother
\theoremstyle{plain}

\newreptheorem{theorem}{Theorem}

\makeatletter
\newtheorem*{rep@claim}{\rep@title}
\newcommand{\newrepclaim}[2]{
\newenvironment{rep#1}[1]{
 \def\rep@title{#2 \ref{##1}}
 \begin{rep@claim}\itshape}
 {\end{rep@claim}}}
\makeatother
\theoremstyle{plain}

\newrepclaim{claim}{Claim}

\newtheorem{Alg}{Algorithm}
\newcommand{\myalg}[4][0cm]{
\medskip
\small{
\fbox{
\parbox{5.2in}{\vspace{#1}
\begin{Alg}\label{#2}{\textsc{ #3}}
\vspace{.1cm}\\ \emph{ #4}
\end{Alg}
}}
\medskip
}}

\begin{document}

\title{
Distortion in metric matching with ordinal preferences
\vspace{15pt}}

\author{Nima Anari\\ \hspace{0pt}{Stanford University}\\\hspace{0pt}{\texttt{anari@stanford.edu}}
\and Moses Charikar \\ \hspace{0pt}{Stanford University}  \\ \hspace{0pt}{\texttt{moses@cs.stanford.edu}}
\and Prasanna Ramakrishnan\\ \hspace{0pt}{Stanford University}  \\ \hspace{0pt}{\texttt{pras1712@stanford.edu}}
}

\date{\vspace{15pt}\small{\today}}

\maketitle

\begin{abstract} 
Suppose that we have $n$ agents and $n$ items which lie in a shared metric space. We would like to match the agents to items such that the total distance from agents to their matched items is as small as possible. However, instead of having direct access to distances in the metric, we only have each agent's ranking of the items in order of distance. Given this limited information, what is the minimum possible worst-case approximation ratio (known as the \emph{distortion}) that a matching mechanism can guarantee?

Previous work by \citet{CFRF+16} proved that the (deterministic) Serial Dictatorship mechanism has distortion at most $2^n - 1$. We improve this by providing a simple deterministic mechanism that has distortion $O(n^2)$. We also provide the first nontrivial lower bound on this problem, showing that any matching mechanism (deterministic or randomized) must have worst-case distortion $\Omega(\log n)$. 

In addition to these new bounds, we show that a large class of truthful mechanisms derived from Deferred Acceptance all have worst-case distortion at least $2^n - 1$, and we find an intriguing connection between \emph{thin matchings} (analogous to the well-known thin trees conjecture) and the distortion gap between deterministic and randomized mechanisms.

\end{abstract}

 \thispagestyle{empty}
\newpage

\section{Introduction}

Social choice theory aims to understand the ways in which the preferences of individuals over several alternatives can be aggregated and used to make a collective decision for the group. Typically the goal is to design mechanisms that maximize some notion of the total welfare of the participants or equivalently minimize some cost. One cost framework that has received increased attention over the last decade is \emph{metric distortion} (see \citet{AFRSV21} for a detailed survey of the area). In this model, the individuals and the alternatives lie in a shared metric space, which is unknown to the mechanism. The participants indicate their preferences over the alternatives by giving an ordinal ranking of the alternatives in order of distance. Based on this information, the mechanism makes a collective decision with the goal of minimizing the \emph{distortion} -- the worst-case approximation ratio between the cost of the mechanism's choice and the optimal choice with full knowledge of the distances in the metric.

The metric distortion framework has been most well understood for elections, where the individuals represent voters, the alternatives represent candidates, and the goal is to select a single candidate with a small total distance to the voters. A long line of work \cite{ABP15,AP17,MW19,Kem20a,Kem20b,GHS20,CR22,PS21,KK22} pinned down the optimal distortion for deterministic mechanisms at 3, and showed that the optimal distortion for randomized mechanisms is between 2.1126 and $3-o(1)$.

In this paper, we turn our attention to understanding the metric distortion model in a less understood but similarly ubiquitous social choice problem: minimum-cost matching. Now the individuals are $n$ \emph{agents}, and the alternatives are $n$ \emph{items} that we would like to match to the agents. The agents and items lie in a shared metric space, unknown to the mechanism, which only receives each agent's ranking of the items in order of distance. The cost of a matching is the total of the distances between the matched pairs, and once again the \emph{distortion} is the worst-case approximation ratio between the cost of the mechanism's chosen matching and the optimal matching. 

Previous work \cite{CFRF+16} first studied this problem with respect to two well-known mechanisms: Serial Dictatorship (SD) and Random Serial Dictatorship (RSD). In these mechanisms, the agents are matched sequentially (either in a fixed order or a random order) to their favorite unmatched item. These mechanisms are particularly attractive because they are \emph{truthful} -- no agent ever has the incentive to misreport their preferences. \citet{CFRF+16} showed that SD guarantees distortion $2^n - 1$, and provided a matching lower bound. They also showed that RSD guarantees distortion $n$, and proved a lower bound of $\Omega(n^{0.29})$. This work motivates the following question, raised by \citet{AFRSV21}.

\begin{question}[{{\cite[Open problem~2]{AFRSV21}}}]\label{q:main}
What is the distortion of the best ordinal rule for the minimum-cost metric matching problem? What if we also require truthfulness?
\end{question}

The metric distortion framework has been attractive since the expressive power of metric spaces allows it to reflect a host of real-world circumstances. In a variety of matching problems with a notion of distance, algorithms for \Cref{q:main} may be useful. Some examples include assigning students to schools, cars to parking spots, patients to doctors, employees to office buildings, jobs to servers, and so on. Note that in many of these examples, the items can accommodate more than one agent, but this can be reduced to the perfect matching setting by having several copies of an item corresponding to its capacity. It is also worth noting that oftentimes, the metric space is not completely hidden -- for example, school districts often know where students live. However, the demand that the algorithm works for any metric space can provide a stronger guarantee that allows for greater flexibility for the agents. For example, if a family finds it more convenient that their child's school is closer to a parent's place of work rather than their home address, a mechanism that only asks for preferences over schools would be able to accommodate this. In other settings, participants may prefer not to reveal their location for privacy reasons. Beyond practical considerations, the metric distortion model gives us a new lens for understanding and evaluating the efficacy of existing social choice mechanisms, and also provides fertile ground motivating the development of novel mechanisms that might find use beyond this framework.

\subsection{Our contributions and technical overview}

\paragraph*{New distortion bounds.} Our main contributions are new upper and lower bounds for \Cref{q:main}. In \Cref{sec:UBs}, we describe a simple, deterministic matching mechanism that guarantees distortion $O(n^2)$. This is an exponential improvement over SD. The idea of the mechanism is to partition the agents into sets, and for each set designate a representative who is close to all of the other agents in the set. This means that the top choices of the representative are good choices for other agents in the set. If two representatives have a common item among their top choices, then they must actually be close to each other, and so the sets can be merged with one of the two representatives becoming the new representative for the merged set. Otherwise, all of the top several choices of the representatives are disjoint, and we can match the members of a set to the top choices of their representative.

In \Cref{sec:LBs} we prove the first nontrivial lower bounds for \Cref{q:main}, showing that any matching mechanism, regardless of whether or not it is randomized, must have distortion $\Omega(\log n)$ in the worst case. The lower bound is based on a metric space where the agents are the leaves of a binary tree, and the items are internal nodes of the tree. The key is that in every subtree, there is one fewer item than agents, and so any matching must match one agent outside of the subtree. By considering a variety of different underlying metric spaces, it can be arranged so that in the optimal matching, only one agent incurs any cost for being matched outside of their subtree, but the mechanism pays for one agent at each of the $\log n$ levels of the tree.

\paragraph*{The impact of truthfulness on distortion.} Truthfulness is a crucial feature of matching mechanisms in applications where agents report their own preferences. Since truthfulness is an additional constraint on a mechanism, a natural question is whether the optimal distortion for truthful mechanisms is worse than that for non-truthful mechanisms. We begin to explore this question in \Cref{sec:truth} by showing that a large class of deterministic truthful mechanisms have distortion at least $2^n - 1$, which is the same as the lower bound proven by \citet{CFRF+16} for SD. 

Our proof works by identifying a general condition (\Cref{def:serial}) on mechanisms where the SD lower bound instance due to \citet{CFRF+16} still applies. Interestingly, this condition captures a wide class of truthful mechanisms that are derived by artificially creating preferences for the items over the agents and running Deferred Acceptance. Such mechanisms are commonly used, especially in the school choice literature \cite{AS03,Shi22,AAL+22}. 

This result can be viewed either as an indication that the optimal distortion for truthful mechanisms could be substantially worse than that for non-truthful mechanisms or as a partial characterization of the way in which a deterministic truthful mechanism must look different from SD and Deferred Acceptance in order for it to have low distortion.

\paragraph*{Thin matchings and rounding randomized mechanisms.} In \Cref{sec:thin}, we seek to understand whether any randomized mechanism with low distortion can be converted into a low distortion deterministic mechanism in a black box fashion. A priori, one may not expect this to be possible. For instance, Serial Dictatorship has exponentially worse distortion than Random Serial Dictatorship. However, we show that it is possible to do such a conversion, with a relatively small blow up in the distortion.

The general idea is to find a matching among agent-item pairs that the randomized mechanism is likely to match. If such a matching had high distortion, then the randomized mechanism would also have high distortion. We use this to show that any randomized mechanism with distortion $D$ can be rounded to a deterministic mechanism with distortion $Dn^2$. Moreover, we show that the blow-up factor in this rounding scheme is characterized by what we call the \emph{thin matching problem}. This problem asks if given a fractional matching, one can always find a perfect matching such that across any cut, the weight of the perfect matching is at most a small factor larger than the weight of the fractional matching. Thin matchings are analogous to the notion of \emph{thin trees} which played a key role in recent developments improving approximation algorithms for the Asymmetric Traveling Salesman Problem \cite{AGM+17,AKS10,GS11,AG15}. The thin matching problem is a clean combinatorial question and might be of independent interest.

\paragraph*{Bounds for \emph{near}-perfect matchings.}

The distortion of the deterministic mechanism in this work, as well as the best-known randomized mechanism \cite{CFRF+16}, are both polynomially large: $O(n^2)$ and $O(n)$ respectively. In light of this, it is natural to ask if there is a way to relax the problem and achieve much better distortion. Indeed, one way to relax the problem called resource augmentation was studied in the prior work of \citet{CFRF+16}. Here we study another relaxation: instead of outputting a perfect matching, what if our algorithm is allowed to output a \emph{near}-perfect matching, a matching between $1-\epsilon$ fraction of the agents and items?\footnote{Note that the cost of the near-perfect matching is still compared to the optimal perfect matching, otherwise this would not be a relaxation.} In \Cref{sec:near-perfect}, we show that this relaxation allows us to lower the distortion significantly. For any constant $\epsilon>0$, there is a randomized mechanism, a simple truncation of the Random Serial Dictatorship, that achieves a distortion of $O(1)$. We also show that this mechanism can be modified to get a high-probability distortion guarantee instead of an in-expectation one. We show that there is a randomized mechanism that outputs a near-perfect matching $M$, that with high probability has distortion $\poly\log(n)$ simultaneously on all metrics compatible with the input preferences. As a simple corollary, we also get deterministic, albeit exponential time, mechanisms that achieve distortion $\poly\log(n)$. 

For an alternate view of these results, note that if we have a near-perfect matching with cost $c$ and a perfect matching with cost $C$, then we can extend the near-perfect matching to a perfect matching with cost $2c + C$. We simply overlay the two matchings, and match any previously unmatched agent to the unmatched item it is connected two via a path. This means that we can also view these results as giving a black box way to get a granular upgrade to any distortion result in the perfect matching setting. For instance, combining this with the mechanism in \Cref{sec:UBs}, we can obtain a deterministic mechanism where the total cost is within a $O(n^2)$ factor of the optimum cost, and where all but an $\eps$ fraction of the agents are matched at a cost within a  $\poly\log(n)$ factor of the optimum.

\subsection{Related work}

Since it was originally introduced by \cite{HZ79}, the problem of matching under incomplete information has been studied under a variety of contexts and models. \cite{FRFZ14} first studied distortion in ordinal matching, but rather than considering costs induced by a metric space, they considered \emph{unit-sum} and \emph{unit-range} valuations (for each agent, their value of the items sums to 1 or their value for each item is in the interval $[0,1]$ respectively). We note that the switch from costs to valuations means that the objective is maximum-matching rather than minimum-matching. In this setting, they showed that RSD guarantees distortion $O(\sqrt{n})$ and that this is the best possible for randomized mechanisms. For deterministic mechanisms with unit-sum valuations, \citet{ABFRV22b} showed that distortion $\Theta(n^2)$ is the best possible. 

Another interesting variant considers unrestricted valuation functions but allows the mechanism to query a small number of the true valuations of each agent. Without these queries, one cannot achieve bounded distortion, but surprisingly with just a few queries per agent, the problem becomes tractable. \citet{ABFRV22b} showed that with $O(\lambda \log n)$ queries per agent, a mechanism can guarantee distortion $O(n^{1/(\lambda + 1)})$, and \citet{ABFRV22a} showed that even with just two queries per agent, one can get distortion $O(\sqrt{n})$.

The metric assumption has also been applied to maximum matching with ordinal preferences rather than minimum matching. There, the best-known distortion bounds are small constants, even for truthful mechanisms. \citet{AS16p,AS16q} gave a mechanism that guarantees distortion $1.6$ and a truthful mechanism that guarantees distortion $1.76$ (both mechanisms are randomized). 

Finally, \citet{AZ21} studied the min-cost ordinal matching problem in metric spaces where the locations of the agents are unknown, but the locations of the items are known. This is natural in some applications such as facility assignment. They showed that this setting is substantially easier, giving a deterministic mechanism with distortion $3$ which is optimal. Their mechanism works by imagining that each agent is at the same location as their favorite item, and then outputting the optimal solution in this ``projected'' instance where all the distances are known.

\section{Preliminaries}\label{sec:prelims}

\paragraph{Notation and terminology.} Throughout the paper we will use $A = \{a_1, a_2, \dots, a_n\}$ to denote the set of agents, and $B = \{b_1, b_2, \dots, b_n\}$ to denote the set of items. An \emph{ordinal matching problem instance} is given by $n$ \emph{preference lists}, one provided by each agent. Each preference list is a permutation of the items, given in increasing order of distance. Agent $a_i$ \emph{prefers} $b_j$ \emph{over} $b_k$ if $d(a_i, b_j) \leq d(a_i, b_k)$, and $a_i$'s \emph{favorite} item in a set $S\subseteq B$ is $\arg\min_{b \in S} d(a_i, b)$. Similarly, $a_i$'s favorite $t$ items are the first $t$ items on their preference list.

\paragraph{Metric spaces.} Let $d:(A\cup B) \times (A\cup B) \to \mathbb{R}_{\geq 0}$ be the distance function of the metric space. By definition, $d$ satisfies the following three properties.

\begin{enumerate}[label=(\arabic*)]
	\item Identity of indiscernibles: $d(x, y) = 0$ if $x = y$, 

	\item Symmetry: $d(x, y) = d(y, x)$,

	\item Triangle inequality: $d(x, y) \leq d(x, z) + d(z, y)$.

\end{enumerate}
In certain definitions of metric spaces, it is required that different elements have positive distance, but we do not require this. This is justified by allowing different agents or items to ``occupy'' the same point in the metric space, in which case their distance is $0$, but different points do indeed have positive distance. We note that just as easily, we can replace $0$ distances with sufficiently small $\epsilon$'s without changing any results. 

\paragraph{Matchings.} We will use bijective functions $M:A \to B$ to represent matchings. Let 
\[\cost(M) \coloneqq \sum_{i\in [n]} d(a_i, M(a_i))\]
denote the cost of the matching $M$. Let $M_\OPT: A \to B$ denote optimal matching, and let $d_i = d(a_i, M_\OPT(a_i))$ so that $\cost(M_\OPT) = \sum_i d_i$. With slight abuse of notation, we will allow $M$ to also denote a set of matched pairs (often treated as edges in a weighted bipartite graph), so that $(a_i, b_j) \in M$ means that $M(a_i) = b_j$.

Given a problem instance, the \emph{distortion} of a particular matching $M$ is $\frac{\cost(M)}{\cost(M_\OPT)}$. The worst-case distortion of a deterministic matching mechanism is its maximum distortion over all problem instances (parametrized by $n$), and the worst-case distortion for a randomized mechanism is the maximum \emph{expected} distortion ($\frac{\mathbb{E}[\cost(M)]}{\cost(M_\OPT)}$) over all problem instances.

When considering randomized mechanisms, we often find it more convenient to view the output as a \emph{fractional} matching rather than a distribution over matchings. In particular, a randomized mechanism is defined by $n^2$ nonnegative real numbers $p_{i, j}$ for $i, j \in [n]$ (semantically, the probability that $a_i$ is matched to $b_j$) such that for each $j$, $\sum_{i = 1}^n p_{i,j} = 1$, and for each $i$, $\sum_{j = 1}^n p_{i,j} = 1$. In this view, the cost of a matching is 

\begin{equation}
\sum_{i = 1}^n \sum_{j = 1}^n p_{i, j}d(a_i, b_j)
\end{equation}

\section{Distortion upper bound}\label{sec:UBs}

\begin{theorem}
There exists a deterministic matching mechanism that guarantees distortion $O(n^2)$.
\end{theorem}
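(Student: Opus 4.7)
The plan is to design an iterative clustering algorithm. I maintain a partition of the agents into groups, with each group $G$ having a designated representative $r \in G$. Initially, every agent is its own group and representative. For a group of current size $k$, I associate the set $T(G)$ of the top $k$ items on $r$'s preference list. I then repeatedly apply the following merge step: while there exist distinct groups $G_i, G_j$ with $T(G_i) \cap T(G_j) \neq \emptyset$, merge them, choosing one of $r_i, r_j$ as the new representative. After at most $n-1$ merges, all $T(G)$'s are pairwise disjoint; since $\sum_G |T(G)| = n = |B|$, they partition $B$, and I match each agent in $G$ to a distinct item of $T(G)$ in any fixed way.

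The analysis hinges on an invariant of the form: for every current group $G$ of size $k$ with representative $r$,
\[
d(r, a) \leq O(k) \cdot \max_{a' \in G} d_{a'} \quad \text{for every } a \in G,
\]
where $d_{a'} = d(a', M_{\OPT}(a'))$. Singleton groups satisfy this trivially. For a merge along a common item $b \in T(G_i) \cap T(G_j)$, both $d(r_i, b)$ and $d(r_j, b)$ are controlled, because $b$ is among the top $k_i$ (resp.\ $k_j$) choices of the representative, and its distance can be related to OPT via a pigeonhole argument on which items OPT matches to the $k_i$ (resp.\ $k_j$) agents in that group. Triangle inequality through $b$ then bounds $d(r_i, r_j)$, and a careful choice of the new representative preserves the invariant for the merged group of size $k_i + k_j$.

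Once the invariant is established, each matched pair $(a, t)$ with $a \in G$, $t \in T(G)$ satisfies $d(a, t) \leq d(a, r) + d(r, t)$. The first summand is $O(k) \cdot \max_{a' \in G} d_{a'}$ by the invariant. For the second, $t$ is among $r$'s top $k$ items, and by pigeonhole some agent $a^* \in G$ has its OPT match inside (or near) $T(G)$, so $d(r, t)$ is likewise bounded by $O(k) \cdot \max_{a' \in G} d_{a'}$ via a triangle inequality through $r, a^*, M_{\OPT}(a^*)$. Summing over the $k_G$ members of each group and then over all groups gives
\[
\cost(M) \;\leq\; \sum_G O(k_G^2)\cdot \max_{a' \in G} d_{a'} \;\leq\; \sum_G O(k_G^2)\cdot \sum_{a' \in G} d_{a'} \;\leq\; O(n^2)\cdot \cost(M_{\OPT}).
\]

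The main obstacle I anticipate is ensuring that the ``closeness constant'' in the invariant grows only linearly in $k$ rather than doubling at each merge --- which would reproduce the $2^n$ blow-up suffered by Serial Dictatorship. A naive triangle inequality through the shared item $b$ would give $C_{k_i+k_j} \leq 2\max(C_{k_i}, C_{k_j}) + O(1)$, which is exponential. The key is that $b$ is simultaneously a top choice of both representatives, so by selecting the new representative as the one better positioned relative to the merged group (e.g.\ the one whose top-$k$ radius is larger, so that its old members remain within their bound and the newcomers are absorbed through $b$), one can instead secure the additive recurrence $C_{k_i+k_j} \leq C_{k_i} + C_{k_j} + O(1)$. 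Getting this bookkeeping --- especially identifying the correct tie-breaking rule during merges --- is the central technical step.
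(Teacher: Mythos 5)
Your algorithm is essentially the paper's: maintain a partition of agents into groups with representatives, merge two groups whenever their representatives' top-$|G|$ item sets overlap, and finally match each group to its representative's top items. Where your proposal breaks is the form of the invariant. You posit $d(r, a) \le C_k\cdot\max_{a'\in G} d_{a'}$ with $C_k = O(k)$, and claim that a clever choice of representative yields the additive recurrence $C_{k_i+k_j} \le C_{k_i}+C_{k_j}+O(1)$. This is not achievable. Tracing the triangle inequality through the shared item $b$: for $a$ in the absorbed group $G_i$ (new representative $r_j$), you have $d(a, r_j) \le d(a, r_i) + d(r_i, b) + d(b, r_j) \le C_{k_i}\max_i + (C_{k_i}+1)\max_i + (C_{k_j}+1)\max_j$, where $\max_i := \max_{a'\in G_i} d_{a'}$. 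The absorbed side's constant appears \emph{twice} (once for $d(a, r_i)$, once for $d(r_i, b)$), and no choice of which representative to retain removes that. The best you can do is keep the representative of the group with the \emph{larger} constant, giving $C_{k_i+k_j} = \max(C_{k_i},C_{k_j}) + 2\min(C_{k_i},C_{k_j}) + 2$. When the adversary forces merges of equal-sized groups, this is $C_{2k} \ge 3C_k + 2$, so $C_k = \Omega(k^{\log_2 3}) \approx k^{1.585}$ rather than $O(k)$. The root cause is that $\max_{a'\in G}d_{a'}$ does not grow under a merge --- it is simply the larger of the two old maxima --- so every bit of blow-up from the triangle-inequality chain has to be absorbed by $C_k$.

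The paper's proof fixes exactly this by replacing the $\max$ with the \emph{sum}. The invariant is $d(a_i, r_j) \le -d_i + 2^{\lambda_j}\sum_{a_k\in S_j}d_k$. Upon merging $S_1, S_2$ with $\lambda_1\le\lambda_2$ (keeping $r_2$), the same chain gives
\[
d(a, r_2) \;\le\; -d_a + 2\cdot 2^{\lambda_1}\Sigma_1 + 2^{\lambda_2}\Sigma_2 \;\le\; -d_a + 2^{\max(\lambda_1+1,\lambda_2)}(\Sigma_1+\Sigma_2),
\]
where $\Sigma_t = \sum_{a_k\in S_t}d_k$. Because the $\Sigma$'s \emph{add} under a merge, the extra factor of $2$ on the smaller side only forces the exponent $\lambda$ to increase when $\lambda_1=\lambda_2$. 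That increment rule immediately yields $|S_j|\ge 2^{\lambda_j}$, so the multiplicative constant $2^{\lambda_j}$ is at most $|S_j|\le n$ --- this is the role your ``$O(k)$ constant'' was meant to play, but it survives the recurrence only when paired with the sum. To repair your argument, swap $\max_{a'\in G}d_{a'}$ for $\sum_{a'\in G}d_{a'}$ in the invariant, track the multiplicative constant as $2^{\lambda}$ with the increment-on-tie rule, and always retain the representative of the group with larger $\lambda$. The final cost bound $\sum_G O(k_G^2)\sum_{a'\in G}d_{a'} \le O(n^2)\cost(M_\OPT)$ then goes through as you envisioned.
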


\begin{proof}

 We will maintain a partition of the agents into sets $S_1, S_2, \dots$, and each set will have a representative $r_1, r_2, \dots$ and so on. We will maintain the invariant that for every $a_i \in S_j$, 
\begin{equation}\label{eq:inv1}
d(a_i, r_j) \leq -d_i + 2^{\lambda_j} \sum_{a_k \in S_j} d_k
\end{equation}
for some $\lambda_j$ associated with each set $S_j$. Note that this implies that for each $a_i \in S_j$, $d(M_\OPT(a_i), r_j) \leq d(M_\OPT(a_i), a_i) + d(a_i, r_j) \leq 2^{\lambda_j} \sum_{a_k \in S_j} d_k$. It follows that there are at least $|S_j|$ distinct items $b$ for which
\begin{equation}\label{eq:inv2}
d(b, r_j) \leq 2^{\lambda_j} \sum_{a_k \in S_j} d_k
\end{equation}
In particular, the above inequality holds when $b$ is any of $r_j$'s favorite $|S_j|$ items.

With this in mind, the matching mechanism is the following. 
\begin{center}
\myalg{def:alg1}{RepMatch}{
Initially, there are $n$ singleton sets $S_1, S_2, \dots, S_n$, with $S_i = \{a_i\}$, $r_i = a_i$, and $\lambda_i = 0$.\\
While there exist two sets $S_i$ and $S_j$ such that some item is both one $r_i$'s favorite $|S_i|$ items and one of $r_j$'s favorite $|S_j|$ items: 
\begin{itemize}
	\item Replace $S_i$ and $S_j$ with $S_i \cup S_j$. 
	\item The representative for $S_i \cup S_j$ is $r_i$ if $\lambda_i \geq \lambda_j$ and $r_j$ otherwise. 
	\item The new $\lambda$ for this set is $\max\{\lambda_i, \lambda_j\}$ if $\lambda_i \neq \lambda_j$, and $\lambda_i + 1$ otherwise.
\end{itemize} 
Arbitrarily match the agents in each $S_i$ to the top $|S_i|$ choices of $r_i$.
}
\end{center}

The key is in showing that the invariant (\ref{eq:inv1}) holds throughout the algorithm by observing how the $\lambda$'s must change as sets merge. Clearly, the invariant holds at the start of the algorithm.

Let's suppose that we merge sets $S_1$ and $S_2$, and without loss of generality, $\lambda_1 \leq \lambda_2$. Since $r_2$ becomes the representative of the merged sets, we would like to bound the distance for any $a \in S_1$ to $r_2$. Suppose that $b$ is an item that is both one of $r_1$'s favorite $|S_1|$ items and one of $r_2$'s favorite $|S_2|$ items. Then by (\ref{eq:inv1}) and (\ref{eq:inv2}), 
\begin{align*}
d(a, r_2) \leq d(a, r_1) + d(r_1, b) + d(b, r_2) &\leq \left( -d_i + 2^{\lambda_1} \sum_{a_k \in S_1} d_k\right) + 2^{\lambda_1} \sum_{a_k \in S_1} d_k + 2^{\lambda_2} \sum_{a_k \in S_2} d_k\\
&\leq -d_i + 2^{\max\{\lambda_1 + 1, \lambda_2\}} \sum_{a_k \in S_1\cup S_2} d_k
\end{align*}
which means that our invariant is maintained, and the new $\lambda$ for $S_1 \cup S_2$ is $\max\{\lambda_1 + 1, \lambda_2\}$, as specified by the algorithm. 

This allows us to show another invariant -- throughout the algorithm, $|S_j| \geq 2^{\lambda_j}$. This follows because when $\lambda_1 \leq \lambda_2$, $ 2^{\lambda_1} + 2^{\lambda_2} \geq 2^{\max\{\lambda_1 + 1, \lambda_2\}}$. 

To conclude, the total cost of matching the elements of $S_j$ to the top $|S_j|$ choices of $r_j$ is at most the total distance from the elements of $S_j$ to $r_j$ plus the total distance from $r_j$ to their top $|S_j|$ items. By (\ref{eq:inv1}) and (\ref{eq:inv2}), both of these terms are at most 
$$|S_j|\cdot 2^{\lambda_j} \sum_{a_k\in S_j} d_k\leq |S_j|^2 \sum_{a_k\in S_j} d_k.$$ 
Therefore, the total cost of the final matching output by the algorithm is at most
$$\sum_{j} 2 |S_j|^2 \sum_{a_k\in S_j} d_k \leq 2\max_j |S_j|^2 \cost(M_\OPT)\leq 2n^2 \cost(M_\OPT) $$
as desired. 
\end{proof}

\section{Distortion lower bound}\label{sec:LBs}

\begin{theorem}
Every (possibly randomized) matching mechanism has worst-case distortion $\Omega(\log n)$.
\end{theorem}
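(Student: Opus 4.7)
The plan is to prove the lower bound via Yao's minimax principle. I will fix a single preference profile $\sigma$ together with a distribution over metrics all consistent with $\sigma$, and show that every deterministic mechanism has expected distortion $\Omega(\log n)$. The crucial point is that because every sampled metric induces the same $\sigma$, the mechanism's output matching $M$ does not depend on the metric realization, so the adversary can calibrate the random metric against whatever fixed $M$ the mechanism commits to.

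The base structure is a complete binary tree $\mathcal{T}$ of depth $k = \log n$ with the $n$ agents at the leaves and $n-1$ items at the internal nodes, plus one additional external item $b^\star$. Using edge weights that decay sufficiently fast with depth (with ties broken consistently), the induced $\sigma$ is the canonical ``nested subtree'' profile, where each agent ranks items in increasing order of the depth of their lowest common ancestor and places $b^\star$ last. The key pigeonhole fact is: for every subtree $T' \subseteq \mathcal{T}$ of size $2^{k-j}$ not containing $b^\star$, we have $|T' \cap A| = |T' \cap B| + 1$, so any perfect matching must include at least one pair that crosses the boundary of $T'$. Aggregating over all $2^j$ subtrees at depth $j$ and over all $j \in \{0, 1, \ldots, k-1\}$, every matching is forced to have a boundary-crossing pair at each of the $\log n$ levels.

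The adversary's metric distribution is then designed so that in every realization, $\OPT$ has cost $O(1)$: it routes a single ``sacrificial'' agent all the way out to $b^\star$ along a cheap root-to-leaf path (chosen to match the specific realization, since $\OPT$ knows the metric), while matching every other agent locally at near-zero cost. A fixed mechanism matching $M$, however, is forced by the counting observation above to supply a boundary-crossing pair at each of the $\log n$ levels, and the random metric must be set up so that the expected edge weight these crossings pick up is $\Omega(1) \cdot \OPT$ at each level. Summing over the $\log n$ levels gives $\mathbb{E}[\cost(M)] = \Omega(\log n) \cdot \OPT$, which by Yao's principle proves the bound for all randomized mechanisms.

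The main obstacle will be designing the metric distribution so that simultaneously (i) every realization induces the same preference profile $\sigma$, (ii) $\OPT$ stays $O(1)$ by exploiting the realization-dependent cheap path, and (iii) no single matching $M$ can avoid paying in expectation at all $\log n$ levels. The third point is the technical crux and will likely require a probabilistic averaging argument: for any fixed $M$ and each level $j$, a careful choice of metric distribution must guarantee that $M$'s forced level-$j$ boundary crossings traverse ``expensive'' edges of the random metric with $\Omega(1)$ probability, independent of how $M$ routes its crossings. Arranging the randomness so that cheapness and expense are correlated across levels in just the right way---so that $\OPT$ can follow one globally cheap path while any fixed $M$ hits an expensive edge at each level---is where the construction must be delicate.
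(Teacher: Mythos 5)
Your high-level structure matches the paper's: a binary tree with agents at the leaves and items at the internal vertices (plus one extra item, which the paper places at the root by giving the root a single child), preference lists induced by geometrically decaying edge weights, and a pigeonhole observation that every subtree has one more agent than item and so forces a matched pair to cross its boundary at every level. The observation that $\OPT$ is $O(1)$ because one ``sacrificial'' agent can be routed to the extra item while everyone else is matched for free is also the right idea.

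However, the proposal stops exactly where the actual content of the proof begins, and you say as much (``the main obstacle will be designing the metric distribution,'' ``where the construction must be delicate''). What is missing is the family of metrics and the argument that some metric in the family charges a fixed matching $\Omega(1)$ at every level. The paper constructs $n$ metrics, one per leaf agent $a_i$: starting from the weighted tree, delete the edges along the root path of $a_i$, connect $a_i$ directly to every item on that path with weight $1$, and set every remaining edge weight to $0$. One must then check (the paper's Claim) that each such metric is consistent with the fixed preference profile, that $\cost(M_\OPT)=1$, and that in metric $i$ every agent $a_j \neq a_i$ is at distance $0$ to exactly the items in its own side of the split at the branch point with $a_i$ and at distance $2$ to everything else. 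With this in hand, the paper does not invoke Yao at all: for a deterministic mechanism it picks the bad metric adversarially via a top-down traversal — at each internal vertex the subtree forces a crossing agent, and one descends into the child \emph{away} from that crossing agent, so that every crossing agent encountered ends up at distance $2$ from its match in the final chosen metric, giving cost $\geq 2k+1$. For randomized mechanisms the same traversal is run against the fractional matching, descending into the side contributing at most half of the crossing weight, so that each level contributes $\geq 1$ in expectation. Your Yao-style distributional framing is a reasonable alternative (a uniform distribution over the same $n$ metrics can be made to work by an averaging argument counting, for each subtree, the probability that the sacrificial agent lands inside it), but as written the proposal neither specifies the metrics nor supplies the averaging argument, so it does not yet constitute a proof.
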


\begin{proof}

The goal will be to construct an ordinal matching instance with $2^k$ agents and items (i.e., specify the preference lists of the agents over the items), along with a collection of metric spaces that are consistent with the specified preferences, such that any matching mechanism must have distortion $\Omega(k)$ on one of the metric spaces in the collection.

Our matching instances are constructed as follows. To simplify the description of the preference lists, we will define a metric space such that the preference lists are easy to decipher by ordering distances in the metric space. The metric space is defined in terms of the distances between vertices on a weighted undirected graph. The graph is structured as a tree with $2^k$ leaves\footnote{We define a leaf to be a vertex with no children and an internal vertex to be one with at least one child, so the root is not a leaf despite having degree $1$.}, where the root has one child, and every other internal vertex has two children. Note that this graph has $2^k$ internal vertices. The items occupy the internal vertices and the agents occupy the leaves. The weight of an edge $(u, v)$ for which $u$ is the parent of $v$ is $2^{k - t}$, where $t$ is the distance from $u$ to the root. See \Cref{fig:def-metric} for a helpful diagram.

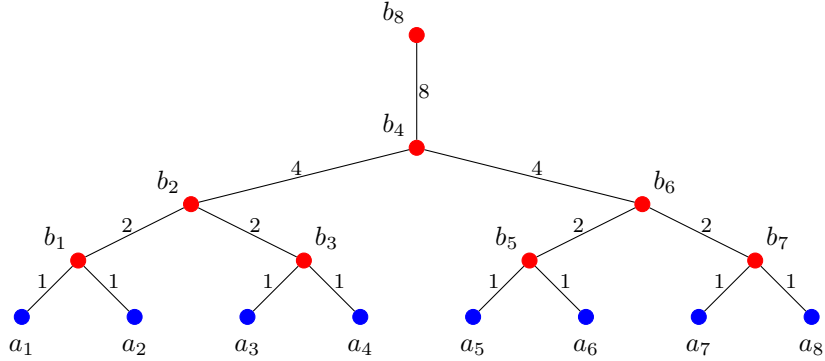
\begin{figure}[h]
\centering
\begin{tikzpicture}[scale=0.75]
\tikzstyle{every node}=[draw,shape=circle, inner sep=2pt];
\node (a1) at (2,0) [fill=blue,color=blue,label=below:$a_1$] {};
\node (a2) at (4,0) [fill=blue,color=blue,label=below:$a_2$] {};
\node (a3) at (6,0) [fill=blue,color=blue,label=below:$a_3$] {};
\node (a4) at (8,0) [fill=blue,color=blue,label=below:$a_4$] {};
\node (a5) at (10,0) [fill=blue,color=blue,label=below:$a_5$] {};
\node (a6) at (12,0) [fill=blue,color=blue,label=below:$a_6$] {};
\node (a7) at (14,0) [fill=blue,color=blue,label=below:$a_7$] {};
\node (a8) at (16,0) [fill=blue,color=blue,label=below:$a_8$] {};
\node (b1) at (3,1) [fill=red,color=red,label=above left:$b_1$] {};
\node (b2) at (5,2) [fill=red,color=red,label=above left:$b_2$] {};
\node (b3) at (7,1) [fill=red,color=red,label=above right:$b_3$] {};
\node (b4) at (9,3) [fill=red,color=red,label=above left:$b_4$] {};
\node (b5) at (11,1) [fill=red,color=red,label=above left:$b_5$] {};
\node (b6) at (13,2) [fill=red,color=red,label=above right:$b_6$] {};
\node (b7) at (15,1) [fill=red,color=red,label=above right:$b_7$] {};
\node (b8) at (9,5) [fill=red,color=red,label=above left:$b_8$] {};
\path[every node/.style={font=\footnotesize}]
(a1) edge node[xshift=-.1cm, yshift=.1cm] {1} (b1) 
(a2) edge node[xshift=.1cm, yshift=.1cm] {1} (b1)
(a3) edge node[xshift=-.1cm, yshift=.1cm] {1} (b3)
(a4) edge node[xshift=.1cm, yshift=.1cm] {1} (b3)
(a5) edge node[xshift=-.1cm, yshift=.1cm] {1} (b5)
(a6) edge node[xshift=.1cm, yshift=.1cm] {1} (b5)
(a7) edge node[xshift=-.1cm, yshift=.1cm] {1} (b7)
(a8) edge node[xshift=.1cm, yshift=.1cm] {1} (b7)
(b1) edge node[xshift=-.1cm, yshift=.1cm] {2} (b2)
(b3) edge node[xshift=.1cm, yshift=.1cm] {2} (b2)
(b5) edge node[xshift=-.1cm, yshift=.1cm] {2} (b6)
(b7) edge node[xshift=.1cm, yshift=.1cm] {2} (b6)
(b2) edge node[xshift=-.1cm, yshift=.1cm] {4} (b4)
(b6) edge node[xshift=.1cm, yshift=.1cm] {4} (b4)
(b4) edge node[xshift=.1cm] {8} (b8);
\end{tikzpicture}
\caption{The metric space that defines the preferences in our ordinal matching instance for $k = 3$. $b_8$ is the root of the tree. The red (internal) nodes are the items and the blue (leaf) nodes are the agents.}\label{fig:def-metric}
\end{figure}

We will repeatedly refer to the structure of this tree as a tool for understanding the structure of the preference lists throughout the proof. The crucial property afforded to us by this construction is the following. 

\begin{claim}\label{claim:crucial-lb}
Consider any vertex $v$ (besides the root) whose parent is $u$. Then the agents in the subtree rooted at $v$ prefer the items in that subtree over $u$ and prefer $u$ over all other items. 
\end{claim}

To see the first part of the claim, note that for any agent $a_i$ and item $b_j$ in the subtree of $v$, we have that 
\[d(a_i, u) = d(a_i, v) + d(v, u) > d(a_i, v) + d(v, b_j) \geq d(a_i, b_j)\]
using the fact that the distance from $v$ to $u$ is greater than the distance from $v$ to any of the vertices in its subtree. The second part follows simply because for any agent $a_i$ in the subtree of $v$, and any $b_j$ not in this subtree, and not $u$, the path from $a_i$ to $b_j$ must go through $u$. 

We now define a set of $2^k$ metric spaces such that any mechanism must have a large distortion with respect to one of these metric spaces. We will have one metric corresponding to each agent $a_i$. Once again, the metrics are defined by distances on a weighted undirected graph. We start with the graph defined above. Then, if $P$ is the set of $k + 1$ items along the path from $a_i$ to the root (including the root), we delete the edges between vertices in $P$ and add edges from $a_i$ to all the items in $P$. The newly added edges have a weight of $1$, and the other edges have a weight of $0$. See \Cref{fig:hard-metric}.

\begin{figure}[h]
\centering
\begin{tikzpicture}[scale=0.75]
\tikzstyle{every node}=[draw,shape=circle, inner sep=2pt];
\node (a1) at (2,0) [fill=blue,color=blue,label=below:$a_1$] {};
\node (a2) at (4,0) [fill=blue,color=blue,label=below:$a_2$] {};
\node (a3) at (6,0) [fill=blue,color=blue,label=below:$a_3$] {};
\node (a4) at (8,0) [fill=blue,color=blue,label=below:$a_4$] {};
\node (a5) at (10,0) [fill=blue,color=blue,label=below:$a_5$] {};
\node (a6) at (12,0) [fill=blue,color=blue,label=below:$a_6$] {};
\node (a7) at (14,0) [fill=blue,color=blue,label=below:$a_7$] {};
\node (a8) at (16,0) [fill=blue,color=blue,label=below:$a_8$] {};
\node (b1) at (3,1) [fill=red,color=red,label=above left:$b_1$] {};
\node (b2) at (5,2) [fill=red,color=red,label=above left:$b_2$] {};
\node (b3) at (7,1) [fill=red,color=red,label=above right:$b_3$] {};
\node (b4) at (9,3) [fill=red,color=red,label=above left:$b_4$] {};
\node (b5) at (11,1) [fill=red,color=red,label=above left:$b_5$] {};
\node (b6) at (13,2) [fill=red,color=red,label=above right:$b_6$] {};
\node (b7) at (15,1) [fill=red,color=red,label=above right:$b_7$] {};
\node (b8) at (9,5) [fill=red,color=red,label=above left:$b_8$] {};
\path[every node/.style={font=\footnotesize}]
(a1) edge node[xshift=-.1cm, yshift=.1cm] {0} (b1) 
(a2) edge node[xshift=.1cm, yshift=.1cm] {0} (b1)
(a3) edge[bend right] node[xshift=-.1cm, yshift=.1cm] {1} (b3)
(a4) edge node[xshift=.1cm, yshift=.1cm] {0} (b3)
(a5) edge node[xshift=-.1cm, yshift=.1cm] {0} (b5)
(a6) edge node[xshift=.1cm, yshift=.1cm] {0} (b5)
(a7) edge node[xshift=-.1cm, yshift=.1cm] {0} (b7)
(a8) edge node[xshift=.1cm, yshift=.1cm] {0} (b7)
(b1) edge node[xshift=-.1cm, yshift=.1cm] {0} (b2)
(a3) edge[bend left] node[xshift=.1cm, yshift=.1cm] {1} (b2)
(b5) edge node[xshift=-.1cm, yshift=.1cm] {0} (b6)
(b7) edge node[xshift=.1cm, yshift=.1cm] {0} (b6)
(a3) edge[bend left] node[xshift=-.1cm, yshift=.1cm] {1} (b4)
(b6) edge node[xshift=.1cm, yshift=.1cm] {0} (b4)
(a3) edge[bend left] node[xshift=-.1cm, yshift=.1cm] {1} (b8);
\end{tikzpicture}
\caption{The consistent metric corresponding to $a_3$.}\label{fig:hard-metric}
\end{figure}
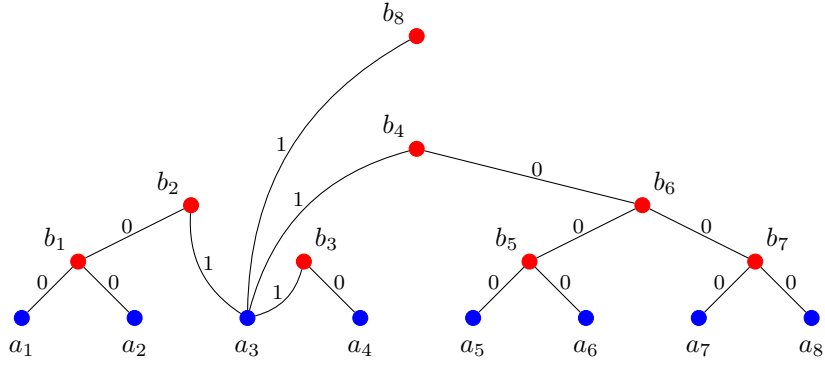

We claim that these metrics are consistent with the preference lists derived before. Agent $a_i$ has a distance of $1$ to every item, so this is consistent with any preference list. For any other agent $a_j$, let $u$ be the first common vertex on the paths from $a_i$ and $a_j$ to the root (in the original tree). Let $v$ be the child of $u$ such that $a_j$ is in the subtree rooted at $v$. Then $a_j$ has distance $0$ to $u$ and all the items in the subtree rooted at $v$, and distance $2$ to all other items. By \Cref{claim:crucial-lb} these distances are consistent with the preference list for $a_j$ that we defined. 

Finally, it remains to show that any mechanism must have a large distortion on one of these metrics. For each of these metrics, $\cost(M_\OPT) = 1$. This is because we can match $a_i$ to the root (which is at distance $1$), and then within every component where the distances are zero, there are an equal number of agents and items. 

We will start by proving the result for deterministic mechanisms and then explain how to extend this to randomized mechanisms. 

Suppose we have a matching $M$ output by a deterministic mechanism. Consider the following process for choosing an agent. Eventually, the metric that causes a large distortion for the mechanism will be the metric corresponding to that agent.

\begin{itemize}

	\item Initially the current vertex is the child of the root.

	\item Until we reach a leaf:

	\begin{itemize}
		
		\item Let the left and right children of the current vertex be $\ell$ and $r$ respectively. 

		\item In the subtree rooted at the current vertex, there is one more agent than item. Therefore, $M$ matches some agent in this subtree to an item outside the subtree. Mark this agent \emph{unlucky}. 

		\item If this agent is in the subtree rooted at $\ell$, then update the current vertex to be $r$, and otherwise update it to be $\ell$.

	\end{itemize}

	\item Choose the agent at the leaf.

\end{itemize}

The process designates one unlucky agent at each step, so there are a total of $k$ unlucky agents. Moreover, in the metric corresponding to the chosen agent, all of the unlucky agents are matched to an item that is distance $2$ away. Hence, these contribute $2k$ to the cost of the mechanism, and the chosen agent also has a cost of $1$, so the total cost is $2k + 1 = \Omega(k)$ as desired.  

Now we will see how to extend this to randomized mechanisms which can now output a fractional matching rather than a perfect matching. We use a similar process to the one described above. The crucial fact will be that within the subtree of the current vertex (any vertex besides the root), the matching gives total weight at least $1$ to pairs $(a_i, b_j)$ where $a_i$ is in the subtree but $b_j$ is not. This is again because the subtree has one more agent than items. 

Instead of saying that this comes from either the left subtree or the right subtree, we can say that at least $1/2$ of this weight must come from either the left or right subtree. If more of the weight comes from the left subtree, the next current vertex is the right child, and otherwise, it is the left child.

All of these pairs $(a_i, b_j)$ have distance $2$, so the total contribution to the cost at each step is at least $1$. Since there are $k$ steps, the total cost is $k + 1 = \Omega(k)$ (where the extra $1$ again comes from the cost of the chosen agent, who has a distance of $1$ to every item). 
\end{proof}

\section{Truthfulness}\label{sec:truth}

Consider the following class of deterministic matching mechanisms.

\begin{definition}\label{def:serial}
Say that a mechanism is \emph{serializable} if there exist permutations $\pi$ and $\sigma$ of $[n]$ with the following property. If the mechanism is given an ordinal matching problem instance where for each $1 \leq i < j \leq n$, $a_{\pi(i)}$ prefers $b_{\sigma(i)}$ over $b_{\sigma(j)}$, then it matches $a_{\pi(i)}$ to $b_{\sigma(i)}$ for all $i$.
\end{definition} 

Intuitively, in these problem instances, the mechanism operates like a Serial Dictatorship where the agents choose in the order $\pi$, and $\sigma$ defines the matching that is outputted. It is easy to see then that every deterministic Serial Dictatorship is serializable. We'll show that all serializable mechanisms fall victim to the same distortion lower bound that applies to Serial Dictatorship.

\begin{claim}
If a mechanism is serializable, then its worst-case distortion is at least $2^n - 1$.
\end{claim}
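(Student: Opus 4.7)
The plan is to reduce the claim directly to the Serial Dictatorship distortion lower bound of \cite{CFRF+16} by a simple relabeling argument. Fix the witnessing permutations $\pi$ and $\sigma$ of the given serializable mechanism. Since $\pi$ and $\sigma$ are fixed up front, independently of the input instance, I can rename each agent $a_{\pi(i)}$ as ``$a_i$'' and each item $b_{\sigma(i)}$ as ``$b_i$'' at the outset. This renaming preserves all distances and matching costs, and it converts \Cref{def:serial} into the cleaner statement: whenever the input satisfies ``$a_i$ prefers $b_i$ over $b_j$ for every $j>i$,'' the mechanism must output the matching $a_i \mapsto b_i$.

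Next, I would plug in the existing CFRF+16 lower-bound instance for Serial Dictatorship (under this relabeling). That instance consists of preference lists together with a consistent metric such that, when SD processes the agents in the order $a_1, \dots, a_n$, it produces the matching $a_i \mapsto b_i$ with cost at least $(2^n - 1)\cdot \cost(M_\OPT)$. Crucially, the very fact that SD outputs this matching on that instance means that, after $a_1, \dots, a_{i-1}$ have grabbed $b_1, \dots, b_{i-1}$, the top remaining item for $a_i$ is $b_i$, i.e., $a_i$ ranks $b_i$ above every $b_j$ with $j>i$. This is exactly the relabeled serializability hypothesis, so the mechanism is forced to output the same matching $a_i \mapsto b_i$ as SD, inheriting the distortion $2^n - 1$.

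The main conceptual step is recognizing that \Cref{def:serial} is precisely the structural property under which the CFRF+16 instance forces SD to produce its bad matching; once this is noticed, no new metric or preference construction is needed and the argument becomes a citation plus bookkeeping. The only minor subtlety is making the relabeling precise, but this is harmless because $\pi$ and $\sigma$ are attributes of the mechanism rather than of the instance, so one is free to align them with the CFRF+16 labeling when designing the adversarial input.
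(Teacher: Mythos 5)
Your proof is correct and takes essentially the same approach as the paper; the paper itself describes its argument as ``essentially a redux of the lower bound given by~\cite{CFRF+16} for Serial Dictatorship.'' The only stylistic difference is that the paper writes out the line-metric instance explicitly (agents at $1,2,\dots,2^{n-1}$, items at $2,4,\dots,2^{n-1},-\eps$) and verifies the serializability hypothesis directly, whereas you cite the CFRF+16 instance as a black box and observe, correctly, that any instance on which SD produces the matching $a_i\mapsto b_i$ automatically satisfies the hypothesis of \Cref{def:serial} after relabeling.
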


\begin{proof}
Our proof is essentially a redux of of the lower bound given by \citet{CFRF+16} for Serial Dictatorship. 

Suppose that the underlying metric space is the real number line. For each $i$, the agent $a_{\pi(i)}$ sits at $2^{i - 1}$. For $i < n$, the item $b_{\sigma(i)}$ sits at $2^{i}$, and $b_{\sigma(n)}$ sits at $-\eps$. 

The optimal matching would match $a_{\pi(1)}$ to $b_{\sigma(n)}$, and $a_{\pi(i)}$ to $b_{\sigma(i - 1)}$ for $i > 1$, for a cost of $1 + \eps$. However, \Cref{def:serial} tells us that for all $i$, $a_{\pi(i)}$ is matched to $b_{\sigma(i)}$. This match costs $2^{i - 1}$ for $1 \leq i < n$, and $2^{n - 1} + \eps$ for $i = n$. Thus, the total cost is $2^n - 1 + \eps$. As $\eps$ tends to $0$, the distortion tends to $2^n - 1$ (or equivalently, we can set $\eps = 0$, and break ties in the preferences appropriately).
\end{proof}

Interestingly, the class of serializable mechanisms captures a wide variety of truthful mechanisms derived from Deferred Acceptance \cite{GS62}. 

Recall that Deferred Acceptance is a two-sided matching mechanism where participants on each side of the matching provide ranked preferences of the other side. Imagining that the items also provide a ranking over the agents, the algorithm works as follows. In rounds, each agent without a current match ``proposes'' to their favorite item that they have not yet proposed to. Then each item compares its proposals to its current match (if it has one), keeps its favorite, and rejects the rest. This continues until all of the agents are matched.

A natural way to convert this two-sided mechanism into a one-sided mechanism is to artificially introduce some preferences for the items. This can be done arbitrarily, or with some structure intended to encourage certain outcomes. For example, in the school choice literature \cite{AS03,Shi22,AAL+22}, this is used to prioritize students within walking distance from schools or students from historically disadvantaged neighborhoods. Moreover, since Deferred Acceptance is truthful from the side of the proposers, any one-sided mechanism derived in this way is also a truthful mechanism.  This captures a wide variety of truthful mechanisms, by the sheer number of ways to fix each item's preferences.

However, no such mechanism can have good distortion, by the following. 

\begin{theorem}
If a mechanism is implemented by Deferred Acceptance with fixed preferences for the items, it is serializable.
\end{theorem}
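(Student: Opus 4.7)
The plan is to construct the permutations $\pi$ and $\sigma$ from how DA behaves on a single canonical instance, and then argue that DA reproduces the prescribed matching on any instance satisfying the serializability hypothesis. Since the item preferences are fixed by assumption, I would consider the canonical instance in which every agent reports the identical list $b_1 \succ b_2 \succ \cdots \succ b_n$. When DA is run on this instance, round $k$ has the $n-k+1$ still-unmatched agents all propose to $b_k$, who keeps its most preferred among them. I will take $\sigma$ to be the identity permutation, and define $a_{\pi(k)}$ to be the agent $b_k$ retains after round $k$. The crucial property recorded from this run is that $a_{\pi(k)}$ is $b_k$'s favorite among the set $\{a_{\pi(k)}, a_{\pi(k+1)}, \ldots, a_{\pi(n)}\}$, since those are exactly the agents still proposing in round $k$.

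Next, to show DA returns the matching $\{(a_{\pi(i)}, b_i)\}$ on any instance satisfying the serializability hypothesis (i.e.\ for each $i$, $a_{\pi(i)}$ ranks $b_i$ first among $\{b_i, b_{i+1}, \ldots, b_n\}$), I would use the McVitie--Wilson sequential variant of DA. This variant introduces agents one at a time and is well known to produce the same final matching as standard DA. Introducing the agents in the order $a_{\pi(1)}, a_{\pi(2)}, \ldots, a_{\pi(n)}$, I will prove by induction on $k$ that after $a_{\pi(k)}$ is introduced, the current matching is exactly $\{(a_{\pi(1)}, b_1), \ldots, (a_{\pi(k)}, b_k)\}$.

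The inductive step should unfold as follows. When $a_{\pi(k)}$ enters, any item they prefer over $b_k$ must lie in $\{b_1, \ldots, b_{k-1}\}$ by the hypothesis. Each such $b_j$ is currently holding $a_{\pi(j)}$ by the inductive hypothesis, and the canonical-instance property tells us that $b_j$ prefers $a_{\pi(j)}$ over every agent in $\{a_{\pi(j+1)}, \ldots, a_{\pi(n)}\}$, in particular over $a_{\pi(k)}$. So $b_j$ immediately rejects $a_{\pi(k)}$ without displacing its current match. After cycling through such rejections, $a_{\pi(k)}$ eventually proposes to $b_k$, which is unmatched and therefore accepts. This completes the induction, and the final matching equals the one prescribed by $\pi$ and $\sigma$.

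The main obstacle is confirming that when $a_{\pi(k)}$ is rejected by some earlier $b_j$, no chain of reassignments is triggered among previously-settled pairs. This is precisely where the canonical-instance property is essential: it ensures $b_j$ always strictly prefers its current match $a_{\pi(j)}$ to any later-introduced agent, so the rejection is clean and the induction carries through without complication.
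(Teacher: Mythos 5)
Your proof is correct and essentially the same as the paper's: both define $\pi$ by item-side greedy selection (your canonical-instance DA run literally simulates the paper's Serial Dictatorship on the items in order $b_1,\dots,b_n$), take $\sigma$ to be the identity, and prove by induction on $k$ that $a_{\pi(k)}$ is matched to $b_k$. Your only deviation is to route the induction through the McVitie--Wilson sequential variant of DA rather than arguing directly about the standard parallel version, which is a clean simplification but rests on the same two key facts as the paper's argument: $a_{\pi(k)}$ prefers $b_k$ to every later item, and $b_j$ prefers $a_{\pi(j)}$ to every later-indexed agent.
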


\begin{proof}
Suppose we run a Serial Dictatorship on the \emph{items} (in the order $b_1, b_2, \dots, b_n$), using their preferences to match them to agents. If $b_i$ is matched to $a_j$, let $\pi(i) = j$. If we make $\sigma$ the identity permutation, we claim that $\pi$ and $\sigma$ satisfy the property in \Cref{def:serial}. 

Suppose that per \Cref{def:serial}, we are given an ordinal matching problem instance, where for each $1 \leq i < j \leq n$, $a_{\pi(i)}$ prefers $b_{i}$ over $b_{j}$. We will show by induction on $k$ that $a_{\pi(k)}$ gets matched to $b_k$.

For the base case, note that $a_{\pi(1)}$ and $b_1$ are each other's first choice, so $a_{\pi(1)}$ will propose to $b_1$ in the first round, and this match will be unbreakable. 

Now suppose that we know that $a_{\pi(i)}$ will be matched to $b_i$ for $i \leq k$. Now consider $a_{\pi(k + 1)}$. By the assumption in \Cref{def:serial},  $a_{\pi(k + 1)}$ prefers $b_{k + 1}$ over all of $b_{k + 2}, \dots, b_n$. This means that the only items that $a_{\pi(k + 1)}$ may propose to before $b_{k + 1}$ are $b_1, b_2, \dots, b_k$. But we know that these will be matched to $a_{\pi(1)}, a_{\pi(2)}, \dots, a_{\pi(k)}$ by the inductive hypothesis, and so any matches between $a_{\pi(k + 1)}$ and $b_i$ for $i \leq k$ will eventually be broken. 

Thus, $a_{\pi(k + 1)}$ eventually proposes to $b_{k + 1}$. Now, because of the Serial Dictatorship that was used to define $\pi$, we know that the only candidates that $b_{k + 1}$ might reject $a_{\pi(k + 1)}$ for are $a_{\pi(1)}, a_{\pi(2)}, \dots, a_{\pi(k)}$ (those that were matched in the Serial Dictatorship before $b_{k + 1}$'s turn). But each for each $i \leq k$, $a_{\pi(i)}$ prefers $b_i$ over $b_{k + 1}$, so $a_{\pi(i)}$ would have to propose to $b_i$ before $b_{k + 1}$. Since we know that $a_{\pi(i)}$ is matched to $b_i$ by the inductive hypothesis, $a_{\pi(i)}$ never proposes to $b_{k + 1}$. This means that when $a_{\pi(k + 1)}$ proposes to $b_{k + 1}$, the match is made and never changes, which completes the induction.

Thus, our chosen $\pi$ and $\sigma$ satisfy the requirements of \Cref{def:serial}, as desired.
\end{proof}

\section{Rounding randomized mechanisms and thin matchings}\label{sec:thin}

Suppose that we have a randomized matching mechanism that guarantees some distortion $D$. Can we always round the fractional matching that it outputs to a perfect matching in a way that the cost of the perfect matching is not much larger than the cost of the fractional matching? In particular, we are concerned with the following question.
\begin{question}\label{q:rtod}
What is the smallest $\alpha = \alpha(n)$ such that the following holds? If a fractional matching gives weight $p_{i,j}$ to each pair $(a_i,b_j)\in A\times B$, then there exists a perfect matching $M$ such that for any metric $d$ we have
$$\sum_{(a_i,b_j) \in M} d(a_i,b_j) \leq \alpha \sum_{1\leq i,j\leq n} p_{i,j} d(a_i,b_j).$$
\end{question}

If a randomized mechanism guarantees distortion $D$, then this means that on any input, the fractional matching it outputs satisfies $ \sum_{1\leq i,j\leq n} p_{i,j} d(a_i,b_j) \leq D\cdot \cost(M_\OPT)$. This means that if the above statement is true for some $\alpha$, then any randomized mechanism that guarantees distortion $D$ can be rounded to a deterministic mechanism with distortion $\alpha D$.

Note that we can get $\alpha \leq n^2$ quite easily. By Hall's theorem, the subgraph that consists of the edges $(a_i,b_j)$ with $p_{i,j} \geq \frac{1}{n^2}$ must have a perfect matching\footnote{For any $S\subseteq A$, if $|N(S)| < |S|$, then $\sum_{a_i \in S} \sum_{b_j \notin N(S)} p_{i,j} \geq 1$. But this means for some $(a_i, b_j) \in S \times (B\setminus N(S))$ has $p_{i,j} \geq \frac{1}{n^2}$ which is a contradiction.}, and any such matching $M$ will have 
$$ \sum_{1\leq i,j\leq n} p_{i,j} d(a_i,b_j)\geq \sum_{(a_i,b_j) \in M} p_{i,j} d(a_i,b_j) \geq \frac1{n^2}\sum_{(a_i,b_j) \in M} d(a_i,b_j).$$
In particular, since RSD guarantees distortion $O(n)$  \cite{CFRF+16}, by computing the matching probabilities of RSD and taking a matching from the edges that are matched with probability at least $\frac{1}{n^2}$, we can get a deterministic mechanism that gets distortion $O(n^3)$. One interesting feature of this mechanism is that it entirely offloads the problem of dealing with the preference lists to RSD.

It remains to see what the best factor of $\alpha$ we can guarantee is. We will show that up to an $O(\log n)$ factor, the optimal $\alpha$ is actually given by the following combinatorial question.

\begin{question}[Thin matchings]\label{q:thin-mts}
What is the smallest $\beta = \beta(n)$ such that the following holds? Suppose we have a fractional matching of a balanced bipartite graph $G = (U\cup V, E)$ that gives weight $p_{u,v}$ to each edge $(u, v)$. Then there exists a perfect matching $M$ such that for any cut $(S, \overline{S})$, the weight of $M$ across the cut is within a factor of $\beta$ of the weight of the fractional matching across the cut. i.e., 
\[|M \cap (S\times \overline{S})| \leq \beta\cdot \sum_{(u,v)\in S\times \overline{S}} p_{u,v}.\]
We say that a matching $M$ satisfying the above inequality for all cuts is \emph{$\beta$-thin}.
\end{question}

\begin{proposition}\label{prop:log-factors}
Let $\alpha$ and $\beta$ be the best answers to \Cref{q:rtod} and \Cref{q:thin-mts} respectively. Then 
\[\beta \leq \alpha \leq O(\log n) \cdot \beta.\]
\end{proposition}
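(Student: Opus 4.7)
The plan is to prove each inequality with a distinct technique tailored to the object being bounded. The easy direction $\beta \leq \alpha$ rests on the observation that cut pseudometrics are admissible metrics under the paper's conventions (which allow zero distances between distinct points). Given any fractional matching $p$, the $\alpha$-rounding guarantee produces a single perfect matching $M$ that works for every metric simultaneously. Specializing to the cut pseudometric $d_S$ that assigns $d_S(x,y) = 1$ when $x$ and $y$ lie on opposite sides of an arbitrary cut $(S, \bar{S})$ of $A \cup B$ and $d_S(x, y) = 0$ otherwise, we obtain $|M \cap (S \times \bar{S})| = \sum_{(a_i, b_j) \in M} d_S(a_i, b_j) \leq \alpha \sum_{i,j} p_{i,j} d_S(a_i, b_j) = \alpha \sum_{(a_i, b_j) \in S \times \bar{S}} p_{i,j}$. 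Since this holds for every cut, the same $M$ is $\alpha$-thin, establishing $\beta \leq \alpha$.

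For $\alpha \leq O(\log n) \cdot \beta$, the key idea is that every finite metric is approximately a nonnegative combination of cut pseudometrics. Given a fractional matching $p$, let $M$ be a $\beta$-thin perfect matching guaranteed by the hypothesis. For any metric $d$ on $A \cup B$ (which has $2n$ points), Bourgain's theorem provides an embedding $f : A \cup B \to \ell_1$ such that, after rescaling, the pullback $\tilde{d}(x, y) = \|f(x) - f(y)\|_1$ satisfies $d(x, y) \leq \tilde{d}(x, y) \leq O(\log n) \cdot d(x, y)$ for all $x, y$. The line metric from each coordinate, $|f_k(x) - f_k(y)|$, decomposes as the integral $\int_{\mathbb{R}} d_{S_t^{(k)}}(x, y) \, dt$ over cut pseudometrics with $S_t^{(k)} = \{z : f_k(z) \leq t\}$, so $\tilde{d}$ itself is a nonnegative integral of cut pseudometrics on $A \cup B$.

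Applying the thinness of $M$ to each cut in this integral representation and combining linearly yields $\sum_{(a,b) \in M} \tilde{d}(a,b) \leq \beta \sum_{i,j} p_{i,j} \tilde{d}(a_i, b_j)$. Chaining with the sandwich bound gives $\sum_{(a,b) \in M} d(a,b) \leq \sum_{(a,b) \in M} \tilde{d}(a,b) \leq \beta \sum_{i,j} p_{i,j} \tilde{d}(a_i, b_j) \leq O(\log n) \beta \sum_{i,j} p_{i,j} d(a_i, b_j)$, so the single matching $M$ realizes an $O(\log n) \beta$-rounding of $p$ simultaneously for every metric $d$. The main obstacle is invoking Bourgain's theorem in the correct form (a non-contractive $\ell_1$ embedding with distortion $O(\log n)$) and pairing it with the $\ell_1$-to-cuts decomposition so that the sandwich bounds on $\tilde{d}$ stack in the right direction; both ingredients are standard, and once set up correctly the estimate is a clean linearity argument.
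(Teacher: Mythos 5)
Your proof is correct and follows essentially the same route as the paper: cut pseudometrics give $\beta \leq \alpha$, and Bourgain's $O(\log n)$-distortion $\ell_1$ embedding together with the cut decomposition of $\ell_1$ gives $\alpha \leq O(\log n)\beta$. The only cosmetic difference is normalization (you rescale to a non-contractive embedding $d \leq \tilde d \leq O(\log n) d$, while the paper writes the equivalent sandwich $\sum_S \gamma_S d_S \leq d \leq O(\log n)\sum_S \gamma_S d_S$), which places the $O(\log n)$ factor on the other side of the chain but yields the same conclusion.
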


\begin{proof}
To prove that $\beta\leq \alpha$, we need to show that if there exists a matching for \Cref{q:rtod} that gets a factor of $\alpha$, then there exists a matching for \Cref{q:thin-mts} that gets a factor of $\alpha$ as well. Indeed, we can use the same matching, and just apply the guarantee for  \Cref{q:rtod} to each of the cut metrics. These metrics assign distance $0$ to edges within a cut, and $1$ to those across the cut, i.e., 
\[ d_S(u,v) := \textbf{1}[(u, v) \in S\times \overline{S}].\]

Next, we prove that  $\alpha \leq O(\log n) \cdot \beta$. We use a well-known consequence of Bourgain's embedding \cite{Bou85}, which says that any metric can be embedded as a conic combination of cut metrics with distortion $O(\log n)$. In particular, for any metric $d$, there exist non-negative weights $\gamma_S$ such that 
\[\sum_{S}\gamma_S d_S(u,v)\leq d(u, v) \leq O(\log n) \cdot \sum_{S}\gamma_S d_S(u,v).\]

This means that if we can get a factor of $\beta$ for \Cref{q:thin-mts}, then we have for any $d$,
\begin{align*}
\sum_{(u,v) \in M} d(u,v) &\leq O(\log n) \sum_{(u,v) \in M} \sum_{S}\gamma_S d_S(u, v) \\
&= O(\log n) \sum_{S} \gamma_S|M \cap (S\times \overline{S})| \\
&\leq O(\log n)\cdot\beta  \sum_S \gamma_S\sum_{(u,v)\in S\times \overline{S}} p_{u,v}\\
&= O(\log n)\cdot \beta \sum_S \gamma_S \sum_{u,v}p_{u,v} d_S(u, v)\\
&= O(\log n)\cdot \beta \sum_{u,v}p_{u,v} \sum_S \gamma_S  d_S(u, v)\\
&\leq O(\log n)\cdot \beta \sum_{u,v}p_{u,v} d(u, v),
\end{align*}
which means exactly that we can get a factor of $O(\log n) \cdot \beta$ for \Cref{q:rtod}.
\end{proof}

In summary, we see if we can prove strong bounds on \Cref{q:thin-mts}, then we can round any randomized mechanism to a deterministic one without losing much in terms of distortion.

\subsection{Relationship with thin trees}

The definition of \emph{thinness} in \Cref{q:thin-mts} closely mirrors the well-known notion of \emph{thin trees} which concerns finding spanning trees, rather than perfect matchings, with few edges across each cut. A definition in the style of \Cref{q:thin-mts} is given below. 

\begin{definition}[Thin trees]
Suppose we have a weighted graph $G$ with edge weights $p_{u, v}$. We say that a spanning tree $T$ is $\beta$-thin if for any cut $(S, \overline{S})$ of the graph,
\[|T \cap (S\times \overline{S})| \leq \beta\cdot \sum_{(u,v)\in S\times \overline{S}} p_{u,v}.\]
\end{definition}
With the above definition, the thin tree conjecture \cite{God04} states that if the edge weights $p_{u, v}$ form a point in the spanning tree polytope,\footnote{That is, the edge weights are a convex combination of indicators of spanning trees, analogous to a fractional matching, but for spanning trees.} then there exists an $O(1)$-thin tree $T$. We remark that in the literature this conjecture is sometimes stated with the alternative assumption that the graph $G$ is $1$-edge-connected, i.e., the weights $p_{u, v}$ sum to at least $1$ across each cut. Up to constant factors in thinness, these assumptions are equivalent due to a classic result of \citet{Nas61}.

Thin trees originated in the study of graph embeddings \cite{God04}, where they were proposed as a path towards Jaeger's flow conjecture \cite{Jae84,Jae88}. They received significant attention much later on when \citet{AGM+17} showed that an efficient algorithm for finding $\beta$-thin trees implies an $O(\beta)$ approximation algorithm for the Asymmetric Traveling Salesman Problem (ATSP). They also gave such an algorithm for $\beta = O(\log n/\log\log n)$. \citet{AG15} gave a proof that $O(\poly\log \log n)$-thin trees always exist, though their proof is non-constructive. The thin trees conjecture postulates that $O(1)$-thin trees always exist (see for example \cite[Conjecture~1.3]{AG15} and the discussion that follows).  See \citet{AKS10,GS11} for some other applications of thin trees in the ATSP literature. 

Just as thin trees have found surprising relevance in very different areas, we find it similarly intriguing that thin matchings are connected to the problem of ordinal metric matching, which on the surface seems unrelated. Note that in \Cref{q:thin-mts}, the notions of preference lists and metric spaces are completely abstracted away, even though they are central to the original problem .

In light of the strong known bounds for thin trees, and the thin trees conjecture, we make the following analogous conjecture.

\begin{conjecture}\label{conj:thin-matchings}
The optimal $\beta$ for \Cref{q:thin-mts} is $O(1)$.
\end{conjecture}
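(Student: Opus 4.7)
The plan is to attack \Cref{conj:thin-matchings} via the maximum-entropy distribution over perfect matchings with prescribed marginals $p_{u,v}$, in direct analogy with the strongly-Rayleigh-based approaches used for thin trees. Given the doubly stochastic matrix $(p_{u,v})$, there is a canonical distribution over perfect matchings obtained by Sinkhorn-scaling the permanent, whose edge marginals coincide with $p_{u,v}$. A celebrated result in the theory of real-stable polynomials (Borcea--Br\"and\'en--Liggett, building on Heilmann--Lieb) implies that this distribution is strongly Rayleigh. In particular, for any cut $(S,\overline{S})$, the number of crossing edges $X_S = |M \cap (S\times \overline{S})|$ is a sum of Bernoullis with negative dependence, and therefore satisfies Chernoff-type concentration around its mean $\mu_S = \sum_{(u,v) \in S\times \overline{S}} p_{u,v}$.

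Next, I would try to union-bound over all $2^{O(n)}$ cuts. For any cut with $\mu_S \geq C n$ for a sufficiently large constant $C$, Chernoff gives $\Pr[X_S > \beta \mu_S] \leq e^{-\Omega(\mu_S)} \leq 2^{-\Omega(n)}$, which survives the union bound with $\beta$ a constant. One then tries to handle cuts with moderate $\mu_S$ by carefully counting the number of cuts of a given expected weight, exploiting the bipartite structure. Writing $S_U = S \cap U$ and $S_V = S \cap V$, a short algebraic calculation yields $\mu_S = |S_U| + |S_V| - 2\sum_{u \in S_U, v \in S_V} p_{u,v}$ and simultaneously $X_S = |S_U| + |S_V| - 2 |M \cap (S_U \times S_V)|$, so thinness reduces to showing that $|M \cap (S_U \times S_V)|$ is comparable to its fractional analogue across all choices of $S_U, S_V$. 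This reformulation cuts down the effective dimension of the problem and interacts well with the product structure of Chernoff bounds.

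The main obstacle---and this is precisely where the known thin-trees methods get stuck---will be handling cuts with very small expected weight, say $\mu_S = O(1)$. For such cuts, concentration cannot by itself deliver multiplicative thinness, since $\Pr[X_S > \beta \mu_S]$ need not be much smaller than $1/\mathrm{poly}(n)$ for constant $\beta$, and there are $2^{\Theta(n)}$ potential such cuts. Overcoming this likely requires moving beyond raw concentration: one could try to invoke stronger structural properties of the permanent distribution from the theory of strongly log-concave polynomials or interlacing families, or to directly exploit bipartiteness---a small-$\mu_S$ cut forces almost all of the fractional matching to lie inside $S_U \times S_V$ and $\overline{S_U} \times \overline{S_V}$, which hints at a recursive decomposition into nearly-independent sub-instances on these blocks, each of which would have to be handled thinly on its own. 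Since a full proof would need to overcome an analogue of the obstruction that has kept the thin-trees conjecture open, I expect that any genuine progress will require a new idea tailored specifically to bipartite matchings, rather than a direct port of existing thin-tree techniques.
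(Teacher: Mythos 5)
This statement is a \emph{conjecture}; the paper does not prove it, and it is left open. There is therefore no paper proof to compare against, but your proposed attack has a concrete flaw, and in fact \Cref{sec:thin} of the paper explicitly discusses and refutes exactly this approach.

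Your first paragraph rests on the claim that the Sinkhorn-scaled/max-entropy distribution over perfect matchings with marginals $p_{u,v}$ is strongly Rayleigh, so that the cut variables $X_S = |M\cap(S\times\overline{S})|$ obey Chernoff-type concentration. This is false. Heilmann--Lieb real-rootedness concerns the \emph{univariate} matching-defect polynomial; it does not make the multivariate generating polynomial of \emph{perfect} matchings real stable, and the associated measure is in general not even negatively associated, let alone strongly Rayleigh. The paper's \Cref{fig:non-concentration} is an explicit counterexample: take $G$ a $4k$-cycle (a bipartite graph) with $p_{u,v}=1/2$ on every edge. There are exactly two perfect matchings, so the unique---and a fortiori max-entropy---distribution with these marginals puts mass $1/2$ on each. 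Alternate edges are then perfectly \emph{positively} correlated, violating negative association, and there is a cut of expected weight $k$ whose value under a sampled matching is uniform on $\{0,2k\}$; no concentration holds. Your algebraic identity $\mu_S = |S_U|+|S_V| - 2\sum_{u\in S_U, v\in S_V}p_{u,v}$ is correct, but it does not rescue the argument. The paper further strengthens the example by reweighting the cycle and taking $n/4k$ disjoint copies: the max-entropy sample rounds each component independently, and with overwhelming probability some cut is blown up by a factor $\Theta(n)$. This defeats not just sharp concentration but any multiplicative tail bound of the kind your union bound would need. So the obstacle is not merely, as your third paragraph anticipates, that small-$\mu_S$ cuts are hard to enumerate; it is that the max-entropy matching distribution genuinely fails to concentrate, in stark contrast to the spanning-tree case. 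Any proof of \Cref{conj:thin-matchings} would require an idea that does not pass through concentration of a product/Gibbs measure over perfect matchings.
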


Since \Cref{prop:log-factors} already has a logarithmic factor, for our rounding application an even weaker version of the above conjecture that just shows a $O(\poly \log n)$ bound would suffice -- it implies that the optimal randomized and deterministic mechanisms for ordinal metric matching have the same distortion up to $\poly\log$ factors.

A natural approach to proving strong bounds for \Cref{q:thin-mts} would be to adapt the tools used to make progress on the thin tree conjecture. In particular, the technique used by \citet{AGM+17} was to write the weights $p_{u,v}$ as the mean of a distribution $\mu$ over spanning trees -- this is possible by the assumption that $p_{u,v}$ is in the spanning tree polytope. Amongst all possible $\mu$ with this mean, \citet{AGM+17} pick the one with the maximum entropy. This ensures that for a random sample $T\sim \mu$, each cut value $|T\cap (S\times \overline{S})|$ is highly concentrated around its mean, and by union bounding tail deviation bounds over all the cuts, $O(\log n/\log \log n)$-thinness follows.

An analogous attempt would be to similarly write a fractional matching as the mean of a distribution over perfect matchings that exhibits concentration properties. However, the challenge with matchings is that distributions over them, even max-entropy ones, exhibit strong correlations that prevent useful concentration inequalities. We demonstrate this with an example depicted in \Cref{fig:non-concentration}. Consider a graph $G$ which is a cycle of length $4k$ with vertices numbered $1,2,\dots,4k$ in the order they appear on the cycle. Let $p_{u,v}=1/2$ on every edge of the cycle -- a fractional matching. There are exactly two perfect matchings in $G$, $M_{\text{odd}}$ formed by the odd edges, and $M_{\text{even}}$ formed by the even edges. There is a unique distribution $\mu$ over perfect matchings with mean $p$: $\mu$ has to put $1/2$ mass on each one of $M_{\text{odd}}$ and $M_{\text{even}}$. Now, consider the cut $(S, \overline{S})$ where $S=\{1, 2, 5, 6, \dots, 4i+1, 4i+2, \dots \}$. This cut has all the edges $(2i, 2i+1)$ and none of the edges $(2i+1, 2i+2)$. For a random sample $M\sim \mu$, the value of the cut $|M\cap (S\times \overline{S})|$ is uniformly distributed over the set $\{0, 2k\}$, which shows that no meaningful concentration around the mean holds here.

\begin{figure}[h]
\centering
\begin{tikzpicture}
\tikzstyle{every node}=[draw, shape=circle, fill=black, inner sep=2pt];
\foreach \i in {0, 1, 2, 3, 4, 5, 6, 7, 8, 9, 10, 11}
	\node (a\i) at (30*\i:3) {};
\foreach \u/\v in {0/1, 2/3, 4/5, 6/7, 8/9, 10/11}
	\draw[blue] (a\u) -- (a\v);
\foreach \u/\v in {1/2, 3/4, 5/6, 7/8, 9/10, 11/0}
	\draw[red] (a\u) -- (a\v);
\draw[dashed] plot [smooth cycle] coordinates {(0:2.5) (30:2.5) (60:3.5) (90:3.5) (120:2.5) (150:2.5) (180:3.5) (210:3.5) (240:2.5) (270:2.5) (300:3.5) (330:3.5)};

\end{tikzpicture}
\caption{The uniform distribution over the blue and red perfect matchings shows no meaningful concentration for the number of edges of a sample that cross the indicated cut.}\label{fig:non-concentration}
\end{figure}
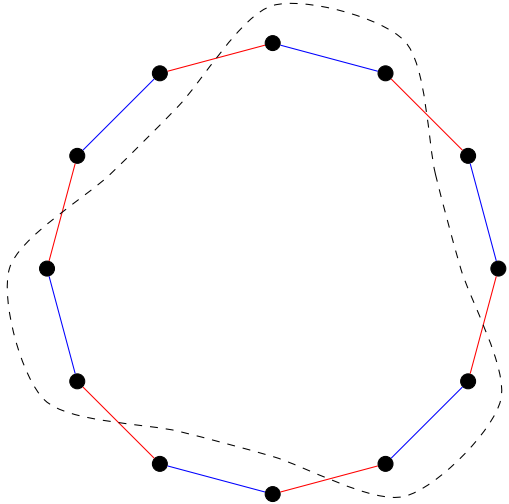

While this example rules out strong concentration around the mean, both $M_{\text{odd}}$ and $M_{\text{even}}$ still produce cut values within a factor of $2$ of the mean, which tempts one to guess that a perhaps weaker multiplicative tail bound can be used to prove \Cref{conj:thin-matchings}. However, we can modify this example to make it even worse. Let $q\in [0, 1]$, and let $p_{u, v}$ be the convex combination of $M_{\text{odd}}$ and $M_{\text{even}}$ with coefficients $q$ and $1-q$ respectively. Then with probability $q$, the sampled perfect matching will have a cut value $1/q$ times as large as the fractional matching. We can boost this failure probability $q$ by parallel repetition. Consider a disjoint union of $n/4k$ cycles of length $4k$ each, where on each cycle the edge weights alternate between $q$ and $1-q$. A max-entropy distribution would independently choose the matching in each cycle. Thus with probability at least $1-(1-q)^{n/4k}$ we will get a cut where the sampled perfect matching has $1/q$ times as many edges as the fractional matching. We can set $k=\Theta(1)$ and $q=\Theta(1/n)$ to get a blow-up by a factor of $\Theta(n)$ in some cut with overwhelming probability.

\section{Near-perfect matchings}\label{sec:near-perfect}

In this section, we study a relaxation of the problem, where our mechanism is allowed to output a non-perfect matching $M$, matching a subset of agents to a subset of items. We still measure the performance of the mechanism in terms of distortion ($\frac{\cost(M)}{\cost(M_\OPT)}$) or expected distortion ($\frac{\mathbb{E}[\cost(M)]}{\cost(M_\OPT)}$), where $M_\OPT$ is the optimal \emph{perfect} matching.

Our first result shows that a simple truncation of the Random Serial Dictatorship (RSD) rule achieves $O(1)$ distortion while matching a $1-\epsilon$ fraction of the agents to items. Random Serial Dictatorship was studied by \citet{CFRF+16} for the non-relaxed problem and was shown to achieve expected distortion $\leq n$. Our analysis of the truncated version follows closely the ideas in their analysis. 

\begin{center}
\myalg{def:alg-truncated-rsd}{TruncatedRSD}{
The goal is to match $m$ out of $n$ agents to items.\\
We start with an empty matching. While fewer than $m$ agents are matched:
\begin{itemize}
	\item Pick an agent $a_i$ uniformly at random from all unmatched agents.
	\item Match $a_i$ to their most preferred unmatched item $b_j$.
\end{itemize}
}
\end{center}

\begin{theorem}\label{thm:rsd-near-perfect}
	If $M$ is the matching output by TruncatedRSD, then
	\[ \mathbb{E}[\cost(M)]\leq \frac{m}{n+1-m}\cdot \cost(M_\OPT). \]
\end{theorem}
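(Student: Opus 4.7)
The plan is to set up a recursive bound on $c_t := d(a_{\sigma(t)}, M(a_{\sigma(t)}))$, the cost incurred at step $t$ of TruncatedRSD, where $\sigma$ denotes the random permutation of agents. First, I would introduce notation: let $A_t \subseteq A$ and $B_t \subseteq B$ denote the unmatched agents and items at the start of step $t$ (so $|A_t| = |B_t| = n - t + 1$), and let $M_{[t-1]} := \{(a_{\sigma(s)}, M(a_{\sigma(s)})) : s < t\}$ denote the partial matching after step $t-1$, of cost $C_{t-1} := \sum_{s < t} c_s$.

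The crucial step is constructing, for each time $t$, a witness matching $\phi_t : A_t \to B_t$ of cost at most $\cost(M_{\OPT}) + C_{t-1}$. Consider the symmetric difference $M_{\OPT} \triangle M_{[t-1]}$. Every vertex in $A_t \cup B_t$ is matched in $M_{\OPT}$ but not in $M_{[t-1]}$, so it has degree $1$ in the symmetric difference; every other vertex has degree $0$ or $2$. Hence the symmetric difference decomposes into cycles together with odd-length alternating paths whose endpoints are exactly the vertices of $A_t \cup B_t$, with each path beginning and ending with an $M_{\OPT}$-edge. I would pair up the two endpoints of each path (one in $A_t$, one in $B_t$) to define $\phi_t$. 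Since the total length of all path edges is at most the sum of all $M_{\OPT}$-edges plus all $M_{[t-1]}$-edges, the triangle inequality applied along each path gives $\cost(\phi_t) \leq \cost(M_{\OPT}) + C_{t-1}$.

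Now at step $t$, once the history is fixed, $a_{\sigma(t)}$ is uniform over $A_t$, and it is matched to its favorite item in $B_t$, which costs at most $d(a_{\sigma(t)}, \phi_t(a_{\sigma(t)}))$ since $\phi_t(a_{\sigma(t)}) \in B_t$. Averaging over $a_{\sigma(t)}$ and then over history yields
\[
\mathbb{E}[c_t] \;\leq\; \frac{\cost(M_{\OPT}) + \mathbb{E}[C_{t-1}]}{n - t + 1}.
\]
Setting $D_t := \cost(M_{\OPT}) + \mathbb{E}[C_t]$, this rearranges to $D_t \leq D_{t-1} \cdot \frac{n-t+2}{n-t+1}$, so that telescoping from $D_0 = \cost(M_{\OPT})$ through $D_m$ gives $D_m \leq \cost(M_{\OPT}) \cdot \frac{n+1}{n+1-m}$, whence
\[
\mathbb{E}[\cost(M)] \;=\; \mathbb{E}[C_m] \;=\; D_m - \cost(M_{\OPT}) \;\leq\; \frac{m}{n+1-m}\cdot \cost(M_{\OPT}).
\]

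The main obstacle is the structural argument for $\phi_t$: one must be careful that vertices in $A_t \cup B_t$ cannot lie on cycles of the symmetric difference (they have degree exactly $1$) and that the alternating structure forces each path to have an $A_t$ endpoint and a $B_t$ endpoint, so that the resulting pairing is genuinely a bijection $A_t \to B_t$. Once this combinatorial fact is in hand, triangle inequality along paths and the uniform-random choice of $a_{\sigma(t)}$ combine cleanly to yield the per-step inequality, and the final telescoping product is routine.
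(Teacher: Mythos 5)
Your proof is correct and is essentially the paper's argument in a different guise: the witness matching $\phi_t$ you extract from the symmetric difference $M_\OPT \triangle M_{[t-1]}$ is precisely the paper's matching $N_{t-1}$, which the paper builds iteratively by a local swap (replacing $(a,b')$ and $(a',b)$ with $(a',b')$ each time $a$ is matched to $b$), and the closed-form bound $\cost(\phi_t)\le \cost(M_\OPT)+C_{t-1}$ is just the telescoped version of the paper's per-step bound $\cost(N_{i+1})\le \cost(N_i)+d(a,b)$. The averaging over the uniformly random agent and the final recursion/telescoping are the same in both.
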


Note that by setting $m=(1-\epsilon)n$, we get expected distortion $\leq (1-\epsilon)/\epsilon=O_\epsilon(1)$, yielding the following simple corollary.
\begin{corollary}
	For every constant $\epsilon>0$, there is a randomized mechanism that outputs a matching $M$ of size $(1-\epsilon)n$ with expected distortion $O(1)$.
\end{corollary}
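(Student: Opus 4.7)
The plan is to analyze TruncatedRSD round by round. Let $X_t$ denote the (random) cost of the match made at round $t$, let $S_t = X_1 + \cdots + X_t$, and let $E_t = \mathbb{E}[S_t]$. I will establish the one-step bound
\[\mathbb{E}[X_t \mid \text{state at start of round } t] \le \frac{S_{t-1} + \cost(M_\OPT)}{n-t+1},\]
which after taking expectations yields the recurrence $E_t \le \frac{(n-t+2)E_{t-1} + \cost(M_\OPT)}{n-t+1}$. A short induction then gives $E_t \le \frac{t}{n-t+1}\cost(M_\OPT)$: assuming $E_{t-1} \le \frac{t-1}{n-t+2}\cost(M_\OPT)$, the recurrence gives $E_t \le \frac{(t-1)\cost(M_\OPT) + \cost(M_\OPT)}{n-t+1} = \frac{t}{n-t+1}\cost(M_\OPT)$. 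Setting $t=m$ proves the theorem.

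To prove the one-step bound, condition on the state at the start of round $t$: the partial matching $M_{\mathrm{partial}}$ already built (which has cost exactly $S_{t-1}$), the set $U_A^{(t)}$ of $n-t+1$ unmatched agents, and the set $U_B^{(t)}$ of $n-t+1$ unmatched items. For each $a \in U_A^{(t)}$, consider the alternating path in $M_{\mathrm{partial}} \cup M_\OPT$ starting at $a$: follow the $M_\OPT$-edge to $M_\OPT(a)$; if that item lies in $U_B^{(t)}$, stop; otherwise follow the $M_{\mathrm{partial}}$-edge to the corresponding matched agent, then the $M_\OPT$-edge from there, and so on. Because $M_{\mathrm{partial}}$ and $M_\OPT$ are both matchings the path cannot repeat a vertex, so it must terminate at some item $\mu(a) \in U_B^{(t)}$, and $\mu : U_A^{(t)} \to U_B^{(t)}$ is a bijection since each unmatched item is the terminus of exactly one such path.

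Applying the triangle inequality along the path gives $d(a,\mu(a)) \le \sum_i d(a_i, M_\OPT(a_i)) + \sum_{i\ge 1} d(a_i, M_{\mathrm{partial}}(a_i))$, where $a_0 = a, a_1, a_2, \ldots$ are the agents appearing on the path. Summing over $a \in U_A^{(t)}$ and using that the paths are vertex-disjoint (so each $M_\OPT$-edge and each $M_{\mathrm{partial}}$-edge is counted at most once), I obtain
\[\sum_{a \in U_A^{(t)}} d(a, \mu(a)) \le \cost(M_\OPT) + S_{t-1}.\]
Since round $t$ picks an agent uniformly from $U_A^{(t)}$ and matches them to their favorite item in $U_B^{(t)}$, the cost of that match is at most $d(a,\mu(a))$ for the chosen $a$, and averaging over the uniform choice of $a$ delivers the one-step bound.

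The main obstacle is the alternating-path bookkeeping: verifying that each path really terminates in $U_B^{(t)}$, that the paths are vertex-disjoint so each $M_\OPT$- and each $M_{\mathrm{partial}}$-edge is counted at most once, and that the telescoping sum cleanly assembles into $\cost(M_\OPT) + S_{t-1}$. Once this is in hand, the rest is a one-line recurrence manipulation followed by the induction above.
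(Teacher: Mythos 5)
Your proof is correct and follows essentially the same approach as the paper's proof of Theorem~\ref{thm:rsd-near-perfect}: compare the cost of the next match against a cheap matching of unmatched agents to unmatched items, average over the uniformly random agent, and close with a recurrence. The only difference is cosmetic — the paper maintains the auxiliary matching $N_i$ incrementally via a triangle-inequality swap at each step (starting from $N_0 = M_\OPT$), whereas you build the equivalent matching $\mu$ afresh each round by tracing alternating paths in $M_{\mathrm{partial}} \cup M_\OPT$; unrolling the paper's swaps produces exactly your alternating paths, so both constructions give the same matching, the same bound $\cost(\mu)\le \cost(M_\OPT)+S_{t-1}$, and the same final recurrence.
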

\begin{proof}[Proof of \Cref{thm:rsd-near-perfect}]
	Let $M_i$ be the partial matching after the first $i$ agents have been matched. We will show by induction that there exists a (random) matching $N_i$ between the $n-i$ unmatched agents and the $n-i$ unmatched items ($M_i\cup N_i$ is a perfect matching) such that $\mathbb{E}[\cost(N_i)]\leq \frac{n+1}{n+1-i}\cdot \cost(M_\OPT)$ and $\mathbb{E}[\cost(M_i)]\leq \frac{i}{n+1-i}\cdot \cost(M_\OPT)$. At the end, by setting $i=m$, we get that $\cost(M_m)\leq \frac{m}{n+1-m}\cost(M_\OPT)$ as desired.
	
	For the base case $i=0$, we can take $N_0$ to be $M_\OPT$ and note that $M_0$ is the empty matching. Clearly $\cost(N_0)=\frac{n+1}{n+1}\cost(M_\OPT)$ and $\cost(M_0)=0$.
	
	Now assume that we have proved the claim for $i$. We will prove it for $i+1$. In iteration $i+1$, the agent we pick, let us call it $a$, is uniformly randomly picked from all unmatched ones, i.e., those agents matched in $N_i$. Now $a$ gets matched to their top available item; the cost of this edge is at most the cost of edge incident to $a$ in $N_i$. Therefore the average cost of this edge, averaged over all unmatched $a$, is upper-bounded by $\cost(N_i)/(n-i)$. Therefore
	\begin{align*} \mathbb{E}[\cost(M_{i+1})] &\leq \mathbb{E}[\cost(M_i)]+\frac{\mathbb{E}[\cost(N_i)]}{n-i}\leq \left(\frac{i}{n+1-i}+\frac{n+1}{(n+1-i)(n-i)}\right)\cost(M_\OPT)\\
		&=\frac{i+1}{n-i}\cdot\cost(M_\OPT).
	\end{align*}
	It remains to construct $N_{i+1}$ and bound its cost. Suppose that agent $a$ gets matched to the item $b$ by the mechanism. Since both $a$ and $b$ are unmatched at the beginning of the iteration, they must both be endpoints of edges in $N_i$. Let the other endpoints of these edges be $b'$ and $a'$ respectively. Now we simply let $N_{i+1}$ be the matching obtained from $N_i$ by deleting the edges $(a, b')$ and $(a', b)$, and instead adding the edge $(a', b')$; if $a=a'$ and $b=b'$, we simply delete the edge. It is easy to see that $N_{i+1}$ is a matching with one fewer edge that matches all agents and items still available after this iteration. It remains to bound its cost. Here we use the triangle inequality: $d(a', b')\leq d(a, b')+d(a, b)+d(a', b)$ to get that $\cost(N_{i+1})\leq \cost(N_i)+d(a,b)$; the inequality is still valid in the special case where $a=a'$ and $b=b'$. We already established that $\mathbb{E}[d(a,b)]\leq \mathbb{E}[\cost(N_i)]/(n-i)$. Therefore
	\[ \mathbb{E}[\cost(N_{i+1})]\leq \left(1+\frac{1}{n-i}\right)\cost(N_i)\leq \frac{n+1}{n-i}\cdot \cost(M_\OPT), \]
	as desired.
\end{proof}

Next, we show that by losing some logarithmic factors, we can get a mechanism whose output near-perfect matching $M$ satisfies a much stronger guarantee: with high probability, it has \emph{distortion} (not expected distortion) at most $O(\log^2(n))$ simultaneously w.r.t.\ all metrics $d$ compatible with the input preference rankings. Note that this is not a trivial implication of \Cref{thm:rsd-near-perfect}. A randomized mechanism is allowed to hedge its bets, i.e., choose a mixed strategy against an adversary who chooses the worst metric $d$. A priori, the gap between pure strategies and mixed strategies can be very large.

\begin{theorem}\label{thm:high-prob}
	For any constant $\epsilon>0$, there exists a computationally efficient randomized mechanism that outputs a matching $M$ of size $(1-\epsilon)n$ such that with probability $\geq 1-1/\poly(n)$, $M$ has distortion at most $O(\log^2(n))$ w.r.t.\ the optimal perfect matching in all metrics $d$ compatible with the input preferences.
\end{theorem}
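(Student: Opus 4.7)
The plan is to combine three ingredients: (i) multiple independent runs of \textsc{TruncatedRSD} for concentration, (ii) a cut-preserving (``thin'') rounding of the averaged fractional matching to an integral near-perfect matching, and (iii) the Bourgain-style embedding of \Cref{prop:log-factors} to pass from cut metrics to arbitrary compatible ones.

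Concretely, the mechanism runs \textsc{TruncatedRSD} with target size $m=(1-\epsilon/2)n$ independently $K=\Theta(\log n)$ times, producing matchings $M_1,\dots,M_K$. Define the averaged fractional matching $\bar p_{ij}=K^{-1}\sum_k\mathbf{1}[(a_i,b_j)\in M_k]$ and round it to an integral matching $M$ of size $(1-\epsilon)n$ that is $O(\log n)$-thin with respect to $\bar p$ in the sense of \Cref{q:thin-mts}. For any compatible metric $d$, the Bourgain decomposition $d(u,v)\le O(\log n)\sum_S\gamma_S d_S(u,v)$ from \Cref{prop:log-factors} then chains to
\[
\cost_d(M)\le O(\log n)\sum_S\gamma_S|M\cap(S\times\overline{S})|\le O(\log^2 n)\sum_{i,j}\bar p_{ij}\,d(a_i,b_j),
\]
where the second step uses the $O(\log n)$-thinness of $M$. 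By \Cref{thm:rsd-near-perfect}, $\mathbb{E}[\sum_{i,j}\bar p_{ij}\,d(a_i,b_j)]=\mathbb{E}[\cost_d(M_1)]=O(\OPT_d)$ (with the hidden constant depending on $\epsilon$), and a multiplicative Chernoff bound on the $K$ independent runs gives high-probability concentration of each cut count $\sum_{(u,v)\in S\times\overline{S}}\bar p_{uv}$ around its mean. A union bound over the $2^{O(n)}$ cuts then controls $\sum_{i,j}\bar p_{ij}\,d(a_i,b_j)$ simultaneously for every compatible $d$ (again using Bourgain), yielding the uniform $O(\log^2 n)$ distortion bound.

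The main obstacle is executing the $O(\log n)$-thin rounding in step (ii): the thin matching problem of \Cref{sec:thin} is open in general. However, in the \emph{near-perfect} setting we have $\epsilon n$ slack agents we can leave unmatched, which should be enough to resolve problematic cuts --- particularly cuts close to balanced, where a perfect matching must cross many edges but a near-perfect one need not. I would attempt this rounding iteratively, unmatching agents across any cut whose crossing count exceeds $O(\log n)\sum_{(u,v)\in S\times\overline{S}}\bar p_{uv}$, or via a max-entropy sampling scheme analogous to the thin-trees construction of \cite{AGM+17}. A secondary subtlety is that the Chernoff concentration must be strong enough to withstand a union bound over $2^{O(n)}$ cuts, which may force $K$ to be $\mathrm{poly}\log(n)$ rather than $\Theta(\log n)$ and requires separate handling of sparse cuts where $\mathbb{E}|M_1\cap(S\times\overline{S})|=O(1)$. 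The deterministic exponential-time mechanism in the corollary then follows by derandomizing via enumeration of the randomness in the $K$ runs.
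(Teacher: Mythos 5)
Your high-level architecture matches the paper's: amplify \textsc{TruncatedRSD}, average into a fractional matching, find a cut-thin integral rounding, and chain through the Bourgain decomposition of \Cref{prop:log-factors}. But the two ingredients you flag as the hard part --- the thin rounding itself and the union bound over exponentially many cuts --- are exactly where the proposal has genuine gaps, and the fixes you sketch do not close them.

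For the rounding, you suggest either a max-entropy distribution over matchings (``analogous to the thin-trees construction of \cite{AGM+17}'') or an iterative unmatch-across-bad-cuts repair. The paper explicitly refutes the first: \Cref{sec:thin} gives a cycle example (and a parallel-repetition amplification of it) showing that max-entropy distributions over perfect matchings exhibit no useful concentration across cuts, unlike spanning trees. The iterative repair is not developed and has no termination/charging argument --- each unmatch can create new bad cuts, and nothing bounds the total number of agents removed by $\epsilon n$. The paper instead uses the dependent rounding of Chekuri--Vondr\'ak--Zenklusen \cite{CVZ11}, whose negative-correlation properties give Chernoff-type tails for linear functionals of the rounded matching without passing through max-entropy. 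Critically, that tail bound alone does not survive a union bound over $2^{O(n)}$ cuts; the paper's second key idea is to \emph{prune} small cuts from the empirical fractional matching $x'$ before rounding --- removing every cut with value in $(0,\epsilon/10)$, at a total loss of at most $\epsilon n/5$ edges --- so the pruned $x''$ has minimum cut $\geq \epsilon/10$, and then apply Karger's near-minimum-cut counting to \emph{fundamental cuts} (cuts within a single connected component of $x''$). This reduces the union bound from $2^{O(n)}$ arbitrary cuts to $n^{O(\ell)}$ cuts of size $\Theta(\ell\epsilon)$, which the \cite{CVZ11} tail beats when $\beta=\Theta(\log n/\epsilon^2)$. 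Your proposal never confronts the fact that a raw union bound over all cuts is hopeless for any polynomially-bounded failure probability.

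Two secondary issues. First, $K=\Theta(\log n)$ runs is far too few: even for the empirical fractional matching to approximate the true one on a single cut (a $[0,n]$-valued Hoeffding bound surviving a $2^{O(n)}$ union bound) you need $h=\poly(n/\epsilon)$ runs, which is what the paper uses. Second, after the \cite{CVZ11} rounding one only gets a matching of expected size $(1-O(\epsilon))n$; the paper handles this with a Markov argument and repeated trials, a small but necessary step your proposal omits. In short: right skeleton, but the two load-bearing lemmas (a thin rounding that provably works, and a cut-counting argument that makes the union bound affordable) are absent, and the routes you propose toward them are ones the paper's own \Cref{sec:thin} shows cannot work.
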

\begin{proof}
	If we run TruncatedRSD with $m=(1-\epsilon/10)n$, at the end we get a random matching of size $m$ with expected distortion $O(1/\epsilon)$. Let $x$ be the fractional (non-perfect) matching which is the expectation of the output of TruncatedRSD. In other words, $x$ is a vector indexed by pairs of agents and items, such that $x_{a,b}=\mathbb{P}[a\text{ matched to }b]$. We know that
	\[ \cost(x)=\sum_{(a, b)\in A\times B}d(a, b)x_{a, b}=\mathbb{E}[\cost(\text{output of TruncatedRSD})]\leq O(1/\epsilon)\cdot \cost(M_\OPT). \]
	Our goal is to round $x$ to an integral matching, without blowing up the cost by much. We employ a strategy based on thin matchings, similar to the proof of \Cref{prop:log-factors}. Before we explain the rounding, we have to deal with a technical issue. We do not know how to efficiently compute the exact vector $x$; it is the expectation of the output of TruncatedRSD, but naively we have to enumerate $\simeq n!$ permutations describing the order in which the agents are picked, in order to compute $x$. Instead, we run TruncatedRSD $h$ independent times, and we let $x'$ be the average of the indicators of the $h$ output matchings. Instead of working with $x$, we work with $x'$. As $h\to \infty$, almost surely $x'\to x$. We will eventually set $h$ to be only polynomially large, making the mechanism computationally efficient.
	
	As in the proof of \Cref{prop:log-factors}, we use Bourgain's embedding \cite{Bou85} to write, up to a distortion of $O(\log(n))$, a hypothetical metric $d$ as a conic combination of cut metrics $d_S$. It is enough to find a (non-perfect) matching $M$ that guarantees for every subset $S$ of our metric space:
	\[ \frac{M(S,\overline{S})}{x(S,\overline{S})}\coloneqq \frac{|M\cap (S\times \overline{S})|}{\sum_{(u, v)\in S\times \overline{S}} x_{u, v}}\leq \beta. \]
	An argument very similar to the proof of \Cref{prop:log-factors} would then show that the distortion of $M$ w.r.t.\ any metric compatible with the input rankings is $\leq O(\beta \log n/\epsilon)$.
	
	We construct this $M$ by employing a randomized rounding mechanism due to \citet{CVZ11}. They showed that for any point $y$ in the matching polytope and a desired parameter $\gamma\in (0, 1/2]$, one can efficiently produce a random matching $M$ such that $\mathbb{E}[\mathbf{1}_M]=(1-\gamma)y$ and for any vector $v\in [0,1]^{A\times B}$, the value $\langle v, \mathbf{1}_M\rangle$ is tightly concentrated around its mean; in particular if $\mu\geq \langle v, y\rangle$ and $t\geq 2$, then
		\[ \mathbb{P}[\langle v,\mathbf{1}_M\rangle \geq t\mu]\leq \exp(-\mu\gamma (2t-3)/20).   \]
	We would like to apply the above concentration bound with $y$ set to be $x'$ and $v$ set to be the indicator of edges in a cut $(S, \overline{S})$; this means $\langle v, \mathbf{1}_M\rangle=|M\cap (S\times \overline{S})|$, the exact quantity we would like to control. However, we need to union bound over all such cuts and there are exponentially many. A common strategy here, also used in the context of the thin tree problem \cite{AGM+17}, is to use the well-known fact due to Karger \cite{Kar93} that the number of near-minimum cuts in any weighted graph is only polynomially large. In particular, if we define $c$ to be the minimum cut
	\[ c\coloneqq \min_{S\neq \emptyset, A\cup B} y(S,\overline{S}), \]
	then the number of sets $S$ where $y(S,\overline{S})\leq \ell\cdot c$ is at most $(2n)^{2\ell}$ for any integer $\ell\geq 1$.
	
	There is a problem with this approach. Unlike in the case of the thin tree problem, there is no a priori lower bound on the minimum cut $c$ of $x'$ or even $x$, which are fractional matchings, and not fractional spanning trees. Indeed, the minimum cut in $x$ and/or $x'$ can be $0$, and the number of cuts of size $0$ can be exponentially large; think of a fractional matching consisting of $\Omega(n)$ disjoint connected components. We overcome this issue by manually removing all small cuts.
	
	Starting with $x'$ we produce a new fractional matching $x''$ as follows: every time there is a cut $(S, \overline{S})$ with value in $(0, \epsilon/10)$, we simply remove all the edges in that cut, i.e., set their value to $0$. We repeat until no such cut can be found and we call the result $x''$. Note that every removal of a cut increases the number of connected components (counted w.r.t.\ the nonzero edges) by one, so the number of iterations can be at most $2n$. In particular, $\sum_e x''_e$ is only smaller than $\sum_e x'_e$ by at most $2n\cdot \epsilon/10=\epsilon n/5$. Thus, $x''$ is a fractional matching of size at least $(1-\epsilon/10)n-\epsilon n/5=(1-3\epsilon/10)n$.
	
	We apply the rounding mechanism of \citet{CVZ11} with $y=x''$ and $\gamma=\epsilon/10$ to get our random matching $M$. Note that the expected size of $M$ is at least $(1-\gamma)(1-3\epsilon/10)n\geq (1-\epsilon/2)n$. Since the size of $M$ can never be larger than $n$, by Markov's inequality, with probability at least $1/2$, the size of $M$ will be $\geq (1-\epsilon)n$. Note that we can boost this probability by repeatedly applying the rounding mechanism of \citet{CVZ11} a number of times; we will show that the probability that $M$ is not thin is negligibly small, so it remains negligibly small even after conditioning on $|M|\geq (1-\epsilon)n$. So repeating the process, at most a logarithmic number of times, until the first time we see a matching of size $\geq (1-\epsilon)n$, gives us the desired result. It just remains to show the probability of $M$ not being thin is negligibly small.
	
	The crucial observation is that it suffices to only consider cuts which nontrivially separate \emph{exactly one} connected component of $x''$. We call such cuts \emph{fundamental cuts}. More precisely, if $C_1,\dots,C_k$ are the connected components of $x''$, then a fundamental cut is a pair $(S, T)$ of disjoint sets such that $S\cup T=C_i$ for some $C_i$. By Karger's arguments, the number of fundamental cuts $(S, T)$ with $x''(S,T)\leq \ell \epsilon/10$ is bounded by $\sum_i |C_i|^{2\ell}\leq (2n)^{2\ell}$. We claim that if for each fundamental cut $(S,T)$ we have, for a logarithmically large $\beta$ to be set later, that
	\[ M(S, T)\leq \beta x(S, T), \]
	then the same inequality holds over \emph{all} global cuts $(U, \overline{U})$, that is
	\[ M(U, \overline{U})\leq \beta x(U, \overline{U}). \]
	Consider such a global cut; for each component $C_i$, the cut $(U, \overline{U})$ induces a fundamental cut $(S_i, T_i)$.\footnote{The cut $(S_i, T_i)$ could be trivial, i.e., one of $S_i$ or $T_i$ could be $\emptyset$, but the arguments still hold.} Note that the edges of the matching $M$ are a subset of the support of $x''$, because $\mathbb{E}[\mathbf{1}_M]=(1-\gamma)x''$. This means that
	\[ M(U, \overline{U})=\sum_i M(S_i, T_i)\leq \sum_i \beta x(S_i, T_i)\leq \beta x(U, \overline{U}), \]
	where the last inequality follows from the fact that any edge with endpoints in $S_i$ and $T_i$ must cross $(U, \overline{U})$.
	
	So it remains to show that the inequality $M(S, T)\leq \beta x(S, T)$ holds for all fundamental cuts with high probability. We derive these by chaining $M(S, T)\leq (\beta/2) x''(S, T)$ with $x''(S, T)\leq x'(S, T)\leq 2 x(S, T)$.
	
	The inequality $x'(S, T)\leq 2 x(S, T)$ holds with high probability if the number of independent trials $h$ of TruncatedRSD that we run to obtain $x'$  is large enough. In particular, the value of the output matching of TruncatedRSD in the fundamental cut $(S, T)$ is a random variable supported on $[0, n]$ with mean $x(S, T)$. So by the standard Hoeffding's inequality, we have
	\[ \mathbb{P}[x'(S, T)-x(S, T)\geq t]\leq \exp(-\Omega(ht^2/n^2)). \]
	For any fundamental cut, we have $x'(S, T)\geq \epsilon/10$. So if we set $t=\epsilon/20$, we get
	\[ \mathbb{P}[x'(S, T)>2x(S, T)]\leq \mathbb{P}[x'(S, T)-x(S, T)> \epsilon/20]\leq \exp(-\Omega(h\epsilon^2/n^2)). \]
	There are at most $2^{O(n)}$ fundamental cuts, so we have
	\[ \mathbb{P}[\exists \text{fundamental cut }(S, T):x'(S, T)>2x(S, T)]\leq \exp(O(n)-\Omega(h\epsilon^2/n^2)).  \]
	It is enough to set $h=\poly(n/\epsilon)$ for the above to be negligibly small: $\exp(-\poly(n))$.
	
	It remains to show that with high probability for all fundamental cuts $(S, T)$ we have $M(S, T)\leq (\beta/2) x''(S, T)$. Here we use the tail bound of \cite{CVZ11} together with the bound on the number of fundamental cuts of various sizes. Consider an integer $\ell\geq 1$. The number of fundamental cuts $(S, T)$ with $x''(S, T)\in [\ell \epsilon/10, (\ell+1)\epsilon/10)$ is at most $n^{O(\ell)}$. For each such cut, we get that
	\[ \mathbb{P}[M(S,T)\geq \beta x''(S, T)/2]\leq \mathbb{P}[M(S,T)\geq \beta \ell \epsilon/20]\leq \exp(-\Omega(\beta\ell \epsilon^2)). \]
	So union bounding over all such cuts, and then summing over all $\ell$, we get
	\[ \mathbb{P}[\exists \text{fundamental cut }(S, T): M(S, T)\geq \beta x''(S, T)/2]\leq \sum_{\ell=1}^\infty n^{O(\ell)}\exp(-\Omega(\beta \ell \epsilon^2)). \]
	For any constant $C>0$, there is some constant $C'$ such that if we set $\beta=C'\log n/\epsilon^2$, we can bound the above as
	\[ \sum_{\ell=1}^\infty \exp(O(\ell \log n)-\Omega(C' \ell \log n))\leq \sum_{\ell=1}^\infty \exp(-C\ell \log n)=O(1/n^C). \qedhere \]
\end{proof}
Note that \Cref{thm:high-prob} also implies the existence of a single near-perfect matching whose distortion is at most $O(\log^2(n))$, a nontrivial fact by itself. As a simple corollary, we get a deterministic, albeit not computationally efficient, mechanism with a similar distortion.
\begin{corollary}
	For any constant $\epsilon>0$, there exists a deterministic exponential time mechanism that outputs a matching of size $(1-\epsilon)n$ guaranteed to have distortion at most $O(\log^2 n)$.
\end{corollary}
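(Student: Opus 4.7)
The plan is to derandomize \Cref{thm:high-prob} by brute force search. Since that theorem produces a randomized mechanism whose output $M$ has size $(1-\epsilon)n$ and simultaneous distortion $O(\log^2 n)$ against every compatible metric with probability at least $1-1/\poly(n) > 0$, the probabilistic method immediately guarantees that for every input at least one such matching $M$ exists. The deterministic mechanism I would propose simply enumerates all matchings of size $(1-\epsilon)n$ between $A$ and $B$ (there are at most $\binom{n}{\lceil \epsilon n\rceil}^2\cdot((1-\epsilon)n)! = 2^{O(n\log n)}$ of them), evaluates the worst-case distortion of each, and outputs any matching meeting the $O(\log^2 n)$ bound.

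The step that requires care is computing, for a fixed candidate matching $M$, its worst-case distortion $\sup_d \cost_d(M)/\OPT_d$ over metrics $d$ compatible with the given preference lists. The useful identity here is that $\OPT_d = \min_{M^*} \cost_d(M^*)$, where $M^*$ ranges over perfect matchings $A\to B$, so
\[\sup_{d} \frac{\cost_d(M)}{\OPT_d} = \max_{M^*}\, \sup_{d} \frac{\cost_d(M)}{\cost_d(M^*)}.\]
For each of the $n!$ candidate optimal matchings $M^*$, the inner supremum is a linear-fractional program in the $\binom{|A\cup B|}{2}$ distance variables, with constraints given by non-negativity, symmetry, the triangle inequality, and monotonicity of each agent's distances along their preference list. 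Normalizing $\cost_d(M^*) = 1$ turns this into an ordinary LP maximizing $\cost_d(M)$, solvable in polynomial time. Taking the maximum of the LP optima over the $n!$ choices of $M^*$ yields the exact worst-case distortion of $M$.

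Running time is $2^{O(n\log n)}$ overall: exponentially many candidates $M$, each requiring $n!$ LP solves of polynomial size. This matches the ``exponential time'' stipulation of the corollary. The main obstacle is only the verification subroutine above—confirming that the adversarial quantity $\sup_d \cost_d(M)/\OPT_d$ is computable by a finite LP enumeration rather than an uncountable optimization; everything else is a routine application of the probabilistic existence guarantee from \Cref{thm:high-prob}.
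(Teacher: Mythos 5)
Your proof is correct, but it takes a genuinely different route from the paper's, and the difference is worth noting. The paper also enumerates candidate near-perfect matchings and relies on Theorem~\ref{thm:high-prob} to guarantee that a good one exists; but for verification it computes the exact expectation $x$ of TruncatedRSD (by summing over all $n!$ orderings) and checks whether each candidate $M$ is suitably thin with respect to $x$, i.e., $M(S,\overline{S}) \le O(\log^2 n)\, x(S,\overline{S})$ for all cuts $S$. Thinness is a \emph{sufficient} condition for the distortion bound, established via Bourgain's embedding inside the proof of Theorem~\ref{thm:high-prob}. Your verification subroutine is instead fully self-contained: you compute the exact worst-case distortion of each candidate $M$ directly, using the finite swap $\sup_d \cost_d(M)/\OPT_d = \max_{M^*} \sup_d \cost_d(M)/\cost_d(M^*)$ over the $n!$ perfect matchings $M^*$, and solving each inner supremum as a linear program (over the $O(n^2)$ distance variables with triangle-inequality and preference-monotonicity constraints) after the Charnes--Cooper normalization $\cost_d(M^*)=1$. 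The net effect is that you use Theorem~\ref{thm:high-prob} only as a black box for existence, whereas the paper reuses its internal thinness machinery as the checkable certificate. Both are deterministic and run in $2^{O(n\log n)}$ time. One small point to make explicit: the LP can be unbounded or the normalization degenerate (when $\cost_d(M^*)$ can be driven to $0$); you should either treat such a candidate as having infinite distortion and discard it, or invoke the paper's convention of perturbing zero distances by a small positive $\eps$, and either fix suffices.
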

\begin{proof}
	We can simply compute the expected value $x$ of the TruncatedRSD mechanism, then enumerate all near-perfect matchings $M$, and for each one check whether $M(S,\overline{S})\leq O(\log^2 n)x(S,\overline{S})$ by enumerating all cuts $S$. We output the first matching $M$ which satisfies this thinness condition. By \Cref{thm:high-prob}, such a near-perfect matching is guaranteed to exist.
\end{proof}

\section{Discussion}\label{sec:discussion}

We discuss some qualitative interpretations of our work, as well as various future directions. For ease of exposition, we split \Cref{q:main} into two open problems separating the question of truthfulness.

\begin{openproblem}
What is the optimal metric distortion for deterministic and randomized mechanisms for min-cost ordinal matching?
\end{openproblem}

\sloppy 
On this problem, we narrow the interval of the optimal distortion for deterministic matchings to $[\Omega(\log n), O(n^2)]$ and for randomized matchings to $[\Omega(\log n), O(n)]$ (using the upper bound proven by \cite{CFRF+16} for RSD). Note that our super-constant lower bound contrasts with several other related metric distortion problems (matching with known item locations, maximum matching, elections) where constant upper bounds are known. Our work in  \Cref{sec:thin} also suggests that it could be quite possible that the optimal distortions for deterministic and randomized mechanisms are the same up to $\log$ factors.

A priori, it is difficult to tell which of the two bounds is closer to the right answer. The lower bound could potentially be improved by considering hard metrics with a greater range of distances (ours only use 0, 1, and 2), but one challenge is that larger distances constrain the preference lists more severely. This makes it difficult to have a variety of very different possible underlying metric spaces that all have large distances in different places. On the other hand, the upper bound could be improved by formalizing this intuition for why lower bounds are difficult into a more clever charging scheme. 

\begin{openproblem}
What is the optimal metric distortion for deterministic and randomized \emph{truthful} mechanisms for min-cost ordinal matching?
\end{openproblem}

We showed that a large class of deterministic mechanisms, which capture a number of known truthful mechanisms, are all as bad as SD in terms of distortion. Could it be that SD is the best deterministic truthful mechanism for this problem? 

It is not hard to see that for a mechanism to be truthful, it must give each agent a set of items that only depends on the other agent's preferences, and then assign the agent their favorite item in the set (the set is exactly all of the items that the agent could be matched to, over all of their possible preference lists). The challenge is that it is difficult to conjure such mechanisms that aren't serializable since the most natural way to prevent an agent from influencing their own set is for decisions to be made in some sequential manner. One can find with a computer search that truthful non-serializable mechanisms do exist, but they are unnatural.

As a counterpoint, part of the issue may just be a lack of imagination, where our attempts to design truthful mechanisms tend to skew toward serialization. Resolving this question may require the development of exciting new kinds of truthful matching mechanisms.

\section*{Acknowledgements} 

We are grateful to the anonymous reviewers for their helpful comments and references. Nima Anari is supported by NSF-CAREER award CCF-2045354 and a Sloan Research Fellowship. Moses Charikar is supported by a Simons Investigator Award. Prasanna Ramakrishnan is supported by Moses Charikar's Simons Investigator Award and Li-Yang Tan's NSF awards 1942123, 2211237, and 2224246. 

\bibliography{pras}

\newcommand{\etalchar}[1]{$^{#1}$}
\begin{thebibliography}{ABFRV22b}

\bibitem[AAL{\etalchar{+}}22]{AAL+22}
Maxwell Allman, Itai Ashlagi, Irene Lo, Juliette Love, Katherine Mentzer,
  Lulabel Ruiz-Setz, and Henry O'Connell.
\newblock Designing {S}chool choice for diversity in the san francisco unified
  school district.
\newblock In {\em Proceedings of the 23rd ACM Conference on Economics and
  Computation}, pages 290--291, 2022.

\bibitem[ABFRV22a]{ABFRV22a}
Georgios Amanatidis, Georgios Birmpas, Aris Filos-Ratsikas, and Alexandros~A
  Voudouris.
\newblock Don't roll the dice, ask twice: The two-query distortion of matching
  problems and beyon.
\newblock {\em arXiv preprint arXiv:2203.01872}, 2022.

\bibitem[ABFRV22b]{ABFRV22b}
Georgios Amanatidis, Georgios Birmpas, Aris Filos-Ratsikas, and Alexandros~A
  Voudouris.
\newblock A few queries go a long way: information-distortion tradeoffs in
  matching.
\newblock {\em Journal of Artificial Intelligence Research}, 74:227--261, 2022.

\bibitem[ABP15]{ABP15}
Elliot Anshelevich, Onkar Bhardwaj, and John Postl.
\newblock Approximating optimal social choice under metric preferences.
\newblock In {\em Proceedings of the AAAI Conference on Artificial
  Intelligence}, volume~29, 2015.

\bibitem[AFRSV21]{AFRSV21}
Elliot Anshelevich, Aris Filos-Ratsikas, Nisarg Shah, and Alexandros~A
  Voudouris.
\newblock Distortion in social choice problems: The first 15 years and beyond.
\newblock {\em arXiv preprint arXiv:2103.00911}, 2021.

\bibitem[AG15]{AG15}
Nima Anari and Shayan~Oveis Gharan.
\newblock Effective-resistance-reducing flows, spectrally thin trees, and
  asymmetric tsp.
\newblock In {\em 2015 IEEE 56th Annual Symposium on Foundations of Computer
  Science}, pages 20--39. IEEE, 2015.

\bibitem[AGM{\etalchar{+}}17]{AGM+17}
Arash Asadpour, Michel~X Goemans, Aleksander Madry, Shayan~Oveis Gharan, and
  Amin Saberi.
\newblock An ${O}(\log n/\log \log n)$-approximation algorithm for the
  asymmetric traveling salesman problem.
\newblock {\em Operations Research}, 65(4):1043--1061, 2017.

\bibitem[AKS10]{AKS10}
Hyung-Chan An, Robert~D Kleinberg, and David~B Shmoys.
\newblock Approximation algorithms for the bottleneck asymmetric traveling
  salesman problem.
\newblock In {\em Approximation, Randomization, and Combinatorial Optimization.
  Algorithms and Techniques}, pages 1--11. Springer, 2010.

\bibitem[AP17]{AP17}
Elliot Anshelevich and John Postl.
\newblock Randomized social choice functions under metric preferences.
\newblock {\em Journal of Artificial Intelligence Research}, 58:797--827, 2017.

\bibitem[AS03]{AS03}
Atila Abdulkadiro{\u{g}}lu and Tayfun S{\"o}nmez.
\newblock School choice: A mechanism design approach.
\newblock {\em American economic review}, 93(3):729--747, 2003.

\bibitem[AS16a]{AS16q}
Elliot Anshelevich and Shreyas Sekar.
\newblock Blind, greedy, and random: Algorithms for matching and clustering
  using only ordinal information.
\newblock In {\em Thirtieth AAAI Conference on Artificial Intelligence}, 2016.

\bibitem[AS16b]{AS16p}
Elliot Anshelevich and Shreyas Sekar.
\newblock Truthful mechanisms for matching and clustering in an ordinal world.
\newblock In {\em International Conference on Web and Internet Economics},
  pages 265--278. Springer, 2016.

\bibitem[AZ21]{AZ21}
Elliot Anshelevich and Wennan Zhu.
\newblock Ordinal approximation for social choice, matching, and facility
  location problems given candidate positions.
\newblock {\em ACM Transactions on Economics and Computation (TEAC)},
  9(2):1--24, 2021.

\bibitem[Bou85]{Bou85}
Jean Bourgain.
\newblock On {L}ipschitz embedding of finite metric spaces in {H}ilbert space.
\newblock {\em Israel Journal of Mathematics}, 52(1):46--52, 1985.

\bibitem[CFRF{\etalchar{+}}16]{CFRF+16}
Ioannis Caragiannis, Aris Filos-Ratsikas, S{\o}ren Kristoffer~Stiil
  Frederiksen, Kristoffer~Arnsfelt Hansen, and Zihan Tan.
\newblock Truthful facility assignment with resource augmentation: An exact
  analysis of serial dictatorship.
\newblock In {\em International Conference on Web and Internet Economics},
  pages 236--250. Springer, 2016.

\bibitem[CR22]{CR22}
Moses Charikar and Prasanna Ramakrishnan.
\newblock Metric distortion bounds for randomized social choice.
\newblock In {\em Proceedings of the 2022 Annual ACM-SIAM Symposium on Discrete
  Algorithms (SODA)}, pages 2986--3004. SIAM, 2022.

\bibitem[CVZ11]{CVZ11}
Chandra Chekuri, Jan Vondr{\'a}k, and Rico Zenklusen.
\newblock Multi-budgeted matchings and matroid intersection via dependent
  rounding.
\newblock In {\em Proceedings of the twenty-second annual ACM-SIAM symposium on
  Discrete Algorithms}, pages 1080--1097. SIAM, 2011.

\bibitem[FRFZ14]{FRFZ14}
Aris Filos-Ratsikas, S{\o}ren Kristoffer~Stiil Frederiksen, and Jie Zhang.
\newblock Social welfare in one-sided matchings: Random priority and beyond.
\newblock In {\em International Symposium on Algorithmic Game Theory}, pages
  1--12. Springer, 2014.

\bibitem[GHS20]{GHS20}
Vasilis Gkatzelis, Daniel Halpern, and Nisarg Shah.
\newblock Resolving the optimal metric distortion conjecture.
\newblock In {\em 2020 IEEE 61st Annual Symposium on Foundations of Computer
  Science (FOCS)}, pages 1427--1438. IEEE, 2020.

\bibitem[God04]{God04}
Luis~A. Goddyn.
\newblock {Some Problems I Like}, 2004.

\bibitem[GS62]{GS62}
David Gale and Lloyd~S Shapley.
\newblock College admissions and the stability of marriage.
\newblock {\em The American Mathematical Monthly}, 69(1):9--15, 1962.

\bibitem[GS11]{GS11}
Shayan~Oveis Gharan and Amin Saberi.
\newblock The asymmetric traveling salesman problem on graphs with bounded
  genus.
\newblock In {\em Proceedings of the twenty-second annual ACM-SIAM symposium on
  Discrete Algorithms}, pages 967--975. SIAM, 2011.

\bibitem[HZ79]{HZ79}
Aanund Hylland and Richard Zeckhauser.
\newblock The efficient allocation of individuals to positions.
\newblock {\em Journal of Political economy}, 87(2):293--314, 1979.

\bibitem[Jae84]{Jae84}
Fran{\c{c}}ois Jaeger.
\newblock On circular flows in graphs.
\newblock In {\em Finite and infinite sets}, pages 391--402. Elsevier, 1984.

\bibitem[Jae88]{Jae88}
Fran{\c{c}}ois Jaeger.
\newblock Nowhere-zero flow problems.
\newblock {\em Selected topics in graph theory}, 3:71--95, 1988.

\bibitem[Kar93]{Kar93}
David~R Karger.
\newblock Global min-cuts in rnc, and other ramifications of a simple min-cut
  algorithm.
\newblock In {\em Soda}, volume~93, pages 21--30, 1993.

\bibitem[Kem20a]{Kem20b}
David Kempe.
\newblock An analysis framework for metric voting based on {LP} duality.
\newblock In {\em Proceedings of the AAAI Conference on Artificial
  Intelligence}, volume~34, pages 2079--2086, 2020.

\bibitem[Kem20b]{Kem20a}
David Kempe.
\newblock Communication, distortion, and randomness in metric voting.
\newblock In {\em Proceedings of the AAAI Conference on Artificial
  Intelligence}, volume~34, pages 2087--2094, 2020.

\bibitem[KK22]{KK22}
Fatih~Erdem Kizilkaya and David Kempe.
\newblock Plurality veto: A simple voting rule achieving optimal metric
  distortion.
\newblock {\em arXiv preprint arXiv:2206.07098}, 2022.

\bibitem[MW19]{MW19}
Kamesh Munagala and Kangning Wang.
\newblock Improved metric distortion for deterministic social choice rules.
\newblock In {\em Proceedings of the 2019 ACM Conference on Economics and
  Computation}, pages 245--262, 2019.

\bibitem[NW61]{Nas61}
C~St~JA Nash-Williams.
\newblock Edge-disjoint spanning trees of finite graphs.
\newblock {\em Journal of the London Mathematical Society}, 1(1):445--450,
  1961.

\bibitem[PS21]{PS21}
Haripriya Pulyassary and Chaitanya Swamy.
\newblock On the randomized metric distortion conjecture.
\newblock {\em arXiv preprint arXiv:2111.08698}, 2021.

\bibitem[Shi22]{Shi22}
Peng Shi.
\newblock Optimal priority-based allocation mechanisms.
\newblock {\em Management Science}, 68(1):171--188, 2022.

\end{thebibliography}
\bibliographystyle{alpha}

\appendix

\section{Lower bound against the Boston mechanism}\label{app:bos}

In this section we briefly discuss the distortion of the Boston mechanism \cite{AS03}, a commonly used mechanism in the school choice literature which seems very natural for this problem but unfortunately has exponential distortion. 

The mechanism proceeds in rounds as follows. In the first round, each agent proposes to their favorite item. Then each item that has some proposers chooses one to irrevocably match to and rejects all others. In the second round, the agents that didn't get matched then propose to their second favorite items, and the mechanism continues in the same way until all agents are matched. Note that the way that the items choose a proposer can be done in a variety of ways: arbitrarily, randomly, according to some fixed or randomized preference ordering, or in the setting with two-sided preferences, according to the preference list of the item. For ease we will say that choices are made according to a fixed priority order, but this can easily be extended to the other settings.

This mechanism seems like a very natural alternative to Serial Dictatorship -- by allowing matchings to happen in parallel, it might cut off the domino effect that causes Serial Dictatorship to have exponential distortion. However, the same effect can still happen, if to a lesser extent. Consider the following instance. The metric space is the line, with the agents are at the points $(1, 2, 4, ..., 2^{k-1})$ and the items lie at the points $(-\epsilon, 2, 4, ..., 2^{k - 1})$. For $t \geq 1$ there are $t$ agents and $t$ items at the point $2^t$. There is one agent at 1, and one item at $-\epsilon$. The co-located agents have identical preferences over the items. Agents have priority in order of their distance to 0.

Then the mechanism will operate as follows. In the first round, for every $t \geq 2$, the agents at $2^t$ all propose to the same item at $2^t$, and that item chooses one of these agents. On the other hand, both the agents at $1$ and $2$ propose to the item at $2$, and the item will choose the agent at $1$. In the second round, for every $t \geq 3$ the agents at $2^t$, the unmatched agents at $2^t$ all propose to the same item at $2^t$, and that item chooses one of these agents. On the other hand, both the unmatched agents at $2$ and $4$ propose to the item at $4$, and the item will choose the agent at $2$. Thus, essentially the same domino effect takes place as with serial dictatorship, only the number of agents and items at each location must scale linearly to prevent the effect from getting cut off. 

It's not hard to see that the overall cost of the mechanism will be $2^{k} + 1$. As the number of agents is $n = k(k-1)/2$, this shows that the mechanism has distortion $2^{\Omega(\sqrt{n})}$. A similar bound can be proven for the randomized variants of the mechanism just by increasing the number of agents and items at each location. This makes the costly matchings likely enough to cause large distortion in expectation.

\end{document}